\documentclass[pdflatex,sn-basic]{sn-jnl}

\usepackage{graphicx}
\usepackage{booktabs}
\usepackage{multirow}
\usepackage{amsmath,amssymb,amsfonts}
\usepackage{amsthm}
\usepackage{mathtools}
\usepackage{algorithm}
\usepackage{algpseudocode}
\usepackage{placeins}
\usepackage{microtype}
\usepackage{url}
\hypersetup{hypertexnames=false}

\theoremstyle{thmstyleone}
\newtheorem{theorem}{Theorem}
\newtheorem{proposition}{Proposition}
\newtheorem{corollary}{Corollary}
\newtheorem{lemma}{Lemma}
\theoremstyle{thmstylethree}
\newtheoremstyle{compactassumption}%
  {6pt plus1pt minus1pt}
  {6pt plus1pt minus1pt}
  {\small\normalfont}
  {0pt}
  {\small\bfseries}
  {.}
  {.5em}
  {}
\theoremstyle{compactassumption}
\newtheorem{assumption}{Assumption}
\theoremstyle{thmstyletwo}
\newtheorem{remark}{Remark}

\newcommand{\bbeta}{\boldsymbol{\beta}}
\newcommand{\btheta}{\boldsymbol{\theta}}
\newcommand{\bx}{\boldsymbol{x}}
\newcommand{\bg}{\boldsymbol{g}}
\newcommand{\bz}{\boldsymbol{z}}

\newcommand{\bA}{\boldsymbol{A}}
\newcommand{\bB}{\boldsymbol{B}}
\newcommand{\bC}{\boldsymbol{C}}
\newcommand{\bD}{\boldsymbol{D}}
\newcommand{\bI}{\boldsymbol{I}}
\newcommand{\bTheta}{\boldsymbol{\Theta}}
\newcommand{\R}{\mathbb{R}}
\newcommand{\Hcal}{\mathcal{H}}
\newcommand{\Bcal}{\mathcal{B}}
\newcommand{\Scal}{\mathcal{S}}
\DeclareMathOperator{\supp}{supp}
\DeclareMathOperator{\trim}{Trim}

\begin{document}

\title[Heterogeneity-calibrated Byzantine-robust CQR]{Heterogeneity-calibrated Byzantine-robust distributed composite quantile regression}

\author[1]{Xiaofei Wu}\email{xfwu1016@ynu.edu.cn}
\author*[2]{Jian Qing Shi}\email{jianqingshi@bnbu.edu.cn}
\affil[1]{\orgdiv{Yunnan Key Laboratory of Statistical Modeling and Data Analysis, School of Mathematics and Statistics},
  \orgname{Yunnan University},
  \orgaddress{\city{Kunming}, \country{China}}}
\affil*[2]{\orgdiv{Department of Statistics and Data Science},
  \orgname{Beijing Normal--Hong Kong Baptist University},
  \orgaddress{\city{Zhuhai}, \country{China}}}

\abstract{We study sparse composite quantile regression (CQR) for distributed data with heterogeneous honest sites and Byzantine workers. Honest sites share a common slope but may differ in their covariate distributions, error laws, and quantile intercepts. The proposed heterogeneity-calibrated robust CQR (HC-RCQR) profiles local intercepts and calibrates scores using an approximate inverse profile Hessian. Honest workers transmit the resulting vectors, whereas Byzantine workers may send arbitrary vectors. The server updates the estimate by coordinatewise trimming and soft thresholding. A scalar example shows how unequal honest-site curvatures allow intermediate Byzantine reports to survive trimming and how ideal calibration reduces their possible effect. We also establish nonidentification of the mean of unrestricted honest-site slopes when fault identities are unknown. Under suitable conditions, we establish conditional contraction and support-recovery guarantees. The bound separates score offset, sampling fluctuation, contamination, calibration error, and the Newton remainder. Simulations and a bike-demand study examine performance under heterogeneous data and adversarial messages.}

\keywords{Byzantine robustness, composite quantile regression, distributed estimation, heterogeneous data, sparse estimation}

\pacs[MSC Classification]{62G08, 62J07, 62H12, 68W15}

\maketitle

\section{Introduction}\label{sec:intro}

Quantile regression characterizes conditional distributions without requiring finite response variance \citep{koenker1978regression}. By pooling several quantile levels, composite quantile regression (CQR) can improve efficiency while retaining robustness to heavy-tailed errors \citep{zou2008composite}; sparsity regularization extends it to problems with many predictors \citep{belloni2011quantile,gu2020sparsecqr}. Distributed quantile regression and distributed sparse CQR have been studied for data stored across multiple sites, using local estimates or score information rather than pooled records \citep{chen2020distributedqr,chen2023distributedcqr,pan2022distributed}.

A distributed system, however, may be neither homogeneous nor fully trustworthy. Honest sites can differ in their covariate distributions, error laws, scales, and quantile intercepts \citep{duan2022heterogeneity,jin2025nonrandom}, whereas Byzantine workers may transmit arbitrary vectors \citep{yin2018optimal,allouah2023mixing}.
After profiling out the site-specific intercepts, local profile curvature measures how the slope score changes with the slope. Away from the target, differences in covariate and error distributions can therefore produce different honest scores for the same slope error. A Byzantine report between these scores may survive coordinatewise trimming. In the scalar example of Proposition~\ref{prop:rawattack}, the largest displacement from the honest-score average has a first-order term proportional to the curvature difference times the slope error.

Existing quantile methods generally address only one side of this problem. Communication-efficient quantile regression, CQR, and pilot-based refinements assume compliant workers \citep{chen2020distributedqr,chen2023distributedcqr,pan2022distributed}. Byzantine procedures based on coordinatewise medians or trimmed means have instead been analyzed mainly under homogeneous sampling \citep{yin2018optimal,tu2021vrmom}, with model-specific extensions to support vector machines and sparse surrogate CQR \citep{wang2025svm,chen2024byzantine}. In particular, the closely related estimator of \citet{chen2024byzantine} assumes identically distributed honest samples. Conversely, non-random-partition CQR allows honest-site heterogeneity but assumes protocol-compliant workers \citep{jin2025nonrandom,jin2026poisson}; it therefore does not address the Byzantine setting considered here.

Related Byzantine-learning methods handle heterogeneous gradients through bucketing, mixing, or splitting \citep{karimireddy2022bucketing,allouah2023mixing,liu2023splitting}. Heterogeneity-aware statistical inference uses site-specific nuisance parameters or transported scores but assumes protocol-compliant sites \citep{duan2022heterogeneity}. Byzantine-resilient multi-task feature selection \citep{wang2026mtfl} and federated policy evaluation \citep{wang2026brfedtd} address other learning problems. Our focus is the combination of local quantile-intercept estimation, curvature calibration, and robust aggregation in sparse CQR.

We propose heterogeneity-calibrated robust CQR (HC-RCQR) for a common sparse
slope with site-specific designs, error laws, and intercepts.
Proposition~\ref{prop:identification} explains why an average of unrestricted
site-specific slopes is not identifiable when fault identities are unknown.
HC-RCQR profiles local quantile intercepts and calibrates slope scores using the
Schur-complement Hessian before server trimming and soft thresholding. Each
honest site sends one $p$-vector per refinement round; its intercepts and
curvature matrix remain local (Fig.~\ref{fig:byzantine}).

\begin{figure}[t]
\centering
\includegraphics[width=0.96\linewidth]{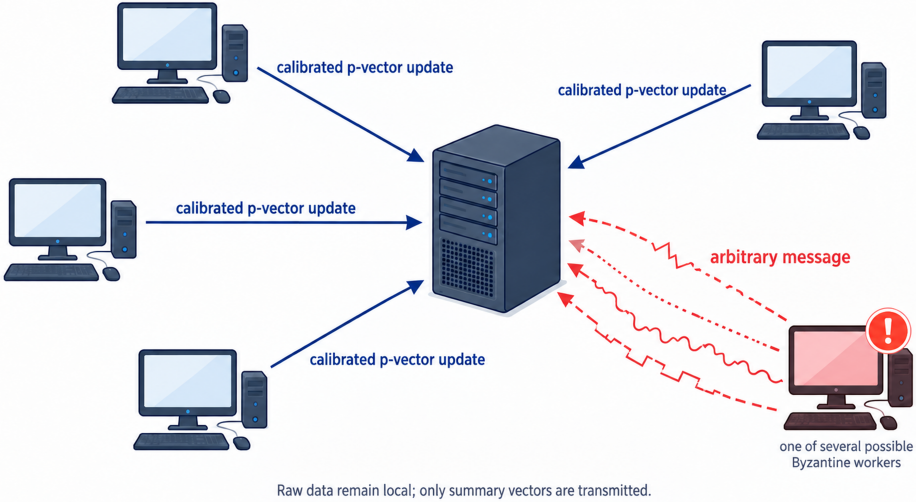}
\caption{Distributed learning with heterogeneous honest workers and one illustrative Byzantine worker. Honest sites transmit calibrated $p$-vector updates while keeping raw data, quantile intercepts, and curvature matrices local; Byzantine sites may transmit arbitrary messages. The model allows more than one Byzantine worker.}
\label{fig:byzantine}
\end{figure}

Conditional on uniform score, calibration, and initialization events, Theorem~\ref{thm:recursion} and Corollary~\ref{cor:rate} give a nonasymptotic contraction bound and support recovery when nonzero coefficients are sufficiently large. The bound separates calibrated-score offset, honest-average fluctuation, contamination, calibration error, and the Newton remainder. Simulations and a bike-demand study compare complete procedures under sign-reversal and heterogeneity-mimicking attacks. The organization of this paper is as follows. Section~\ref{sec:model} introduces the model and identification argument; Section~\ref{sec:method} develops the procedure; Section~\ref{sec:theory} gives the conditional theory; Sections~\ref{sec:sim} and~\ref{sec:bike} report numerical results; and Section~\ref{sec:discussion} discusses limitations and extensions; the appendices collect proofs, simulation and application details, a finite-dimensional verification, and matched calibration comparisons.

\section{Model, target, and identification}\label{sec:model}

\subsection{Heterogeneous common-slope quantile model}

There are $m$ workers. A fixed but unknown honest set $\Hcal$ has size $m_H$, and $\Bcal=\{1,\ldots,m\}\setminus\Hcal$ denotes the Byzantine set. Honest worker $k$ stores independent observations $\{(Y_{ki},\bx_{ki}):1\leq i\leq n_k\}$, with $\bx_{ki}\in\R^p$. Across honest sites the observations are independent but need not be identically distributed. For fixed quantile levels $0<\tau_1<\cdots<\tau_q<1$, assume
\begin{equation}
 Q_{\tau_\ell}(Y_{ki}\mid\bx_{ki})=b^*_{k\ell}+\bx_{ki}^{\mathsf T}\bbeta^*,
 \qquad k\in\Hcal,\quad \ell=1,\ldots,q.             \label{eq:model}
\end{equation}
The same sparse slope $\bbeta^*\in\R^p$ is imposed across the honest sites and the selected quantile levels, with support $\Scal=\supp(\bbeta^*)$ and $s=|\Scal|$. The intercepts $b^*_{k\ell}$ are site- and quantile-specific. The common-slope restriction therefore specifies a single structural target at the selected quantiles; Assumption~\ref{ass:curvature} provides the curvature needed for the local analysis. A simple compatible model is $Y_{ki}=\bx_{ki}^{\mathsf T}\bbeta^*+\varepsilon_{ki}$, where the error law may vary across sites but is independent of the covariates. Its site-specific error quantiles are absorbed by the intercepts. Site-dependent or quantile-dependent slopes are outside this paper. Thus, their covariate distributions and conditional error laws may have different scales, tail behavior, and densities at the selected quantiles, and their intercepts may differ as well, provided these differences do not alter the slope in \eqref{eq:model}.

We allow Byzantine workers to send arbitrary messages, including coordinated and history-dependent ones. We require only $|\Bcal|\leq \alpha m$ for some known upper bound $\alpha<1/2$. They need not generate data from model~\eqref{eq:model}. Raw records remain at their originating sites, and only summary vectors are exchanged, following the standard distributed-inference setup \citep{chen2020distributedqr,chen2023distributedcqr,duan2022heterogeneity}.

\subsection{Why an honest-site slope average is not identifiable}

One might instead allow a different slope $\bbeta_k$ at every honest site and combine these slopes, as in the meta-analytic framework of \citet{becker2007synthesis}. Consider their arithmetic mean over honest sites. When the server does not know which sites are honest, it cannot generally recover this mean even from exact population-slope reports. Proposition~\ref{prop:identification} gives two configurations with identical reports and different honest-site means.

\begin{proposition}[Identification boundary]\label{prop:identification}
Fix integers $m\geq3$ and $1\leq b<m/2$. Suppose exactly $b$ workers are Byzantine; their number is known but their identities are not. Each honest worker reports its population slope without sampling error, whereas each Byzantine worker may report any vector. If the honest-site slopes may vary arbitrarily in $\R^p$, then the target
\[
 \boldsymbol\theta_{\Hcal}=\frac{1}{m-b}\sum_{k\in\Hcal}\bbeta_k
\]
is not identifiable from the worker-labelled reports. More precisely, for every nonzero $\boldsymbol a\in\R^p$, two admissible configurations produce identical reports but targets $-\boldsymbol a/(m-b)$ and $\boldsymbol a/(m-b)$. Every deterministic estimator therefore has Euclidean error at least $\|\boldsymbol a\|_2/(m-b)$ in one configuration.
\end{proposition}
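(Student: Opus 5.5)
The plan is a two-point indistinguishability construction. The report vector is the same in both configurations; only the labelling of which workers are honest changes. The honest-site means then differ by $2\boldsymbol a/(m-b)$. Since $b\geq1$, I will designate worker $1$ and worker $2$ as the swapped pair; $m\geq3$ guarantees such a pair exists. Let $B_0$ be any set of $b-1$ workers from $\{3,\ldots,m\}$, which is possible because $b-1\leq m-3$ when $b<m/2$ and $m\geq 3$. Set $T=\{3,\ldots,m\}\setminus B_0$, so $|T|=m-b-1\geq0$.

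\emph{Configuration I.} The Byzantine set is $\Bcal_I=B_0\cup\{2\}$ and the honest set is $\Hcal_I=T\cup\{1\}$. Give every site in $T$ the slope $\boldsymbol 0$ and worker $1$ the slope $\bbeta_1=-\boldsymbol a$. Let every Byzantine worker in $B_0$ report $\boldsymbol 0$, and let worker $2$ report $+\boldsymbol a$.

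\emph{Configuration II.} The Byzantine set is $\Bcal_{II}=B_0\cup\{1\}$ and the honest set is $\Hcal_{II}=T\cup\{2\}$. Keep the sites in $T$ at slope $\boldsymbol 0$ and give worker $2$ the slope $+\boldsymbol a$. Let worker $1$ (now Byzantine) report $-\boldsymbol a$ and $B_0$ again report $\boldsymbol 0$.

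Both configurations have exactly $b$ Byzantine workers. Honest workers report their exact population slopes, and Byzantine reports are unrestricted, so both are admissible. The labelled report vector is $(-\boldsymbol a,\boldsymbol a,\boldsymbol 0,\ldots,\boldsymbol 0)$ in each. The targets are $\boldsymbol\theta_{\Hcal_I}=-\boldsymbol a/(m-b)$ and $\boldsymbol\theta_{\Hcal_{II}}=\boldsymbol a/(m-b)$, which differ because $\boldsymbol a\neq\boldsymbol 0$. This establishes non-identifiability.

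For the minimax statement, any deterministic estimator is a function of the reports, so it outputs the same $\hat{\boldsymbol\theta}$ in both configurations. The triangle inequality gives
\[
\|\hat{\boldsymbol\theta}+\boldsymbol a/(m-b)\|_2+\|\hat{\boldsymbol\theta}-\boldsymbol a/(m-b)\|_2\geq 2\|\boldsymbol a\|_2/(m-b),
\]
so at least one of the two errors is at least $\|\boldsymbol a\|_2/(m-b)$.

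The argument is routine, so there is no real obstacle. The only point needing care is the counting: the swap must keep exactly $b$ Byzantine workers in both configurations, and the honest sets must be nonempty. That is the reason for choosing $B_0$ of size $b-1$ from the remaining $m-2$ workers. The condition $b<m/2$ is not otherwise needed, which shows that the failure holds even under a Byzantine minority.
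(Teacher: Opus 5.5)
Your proof is correct and is essentially the paper's argument. Both place $m-b-1$ honest zero reports and $b-1$ Byzantine zero reports, swap which of the two workers reporting $-\boldsymbol a$ and $\boldsymbol a$ is honest, and conclude with the triangle inequality.
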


\noindent\textit{Proof.}
Write $m_H=m-b$. Let a set $S$ of $m_H-1$ workers report zero, let two further workers $u$ and $v$ report $-\boldsymbol a$ and $\boldsymbol a$, and let the remaining $b-1$ workers also report zero. In configuration A, the honest set is $S\cup\{u\}$; in configuration B, it is $S\cup\{v\}$. Both configurations have exactly $b$ Byzantine workers and the same ordered reports, but their honest means differ by $2\boldsymbol a/m_H$. The triangle inequality gives the stated error bound. $\square$

For $m=3$ and $b=1$, the reports are $(-\boldsymbol a,\boldsymbol0,\boldsymbol a)$ and the honest means are $\pm\boldsymbol a/2$. Under an upper bound $\alpha$, the construction requires $\lfloor\alpha m\rfloor\geq1$. Other targets and structured slope models may be identifiable, but need their own identification argument. Here we retain a common slope and allow heterogeneous designs, error laws, and intercepts.

\section{Heterogeneity-calibrated robust CQR}\label{sec:method}

\subsection{Smoothed and profiled local loss}

For $\rho_\tau(u)=u\{\tau-\mathbf{1}(u<0)\}$, the empirical CQR loss is non-differentiable. To obtain stable local Newton steps, we use the logistic convolution
\begin{equation}
 \rho_{\tau,h}(u)=\tau u+h\log\{1+\exp(-u/h)\},          \label{eq:smoothloss}
\end{equation}
where $h>0$ is a bandwidth. This convex, twice differentiable approximation follows the convolution-smoothed construction of \citet{he2023smoothed}. Its derivative is $\psi_{\tau,h}(u)=\tau-\{1+\exp(u/h)\}^{-1}$.

At honest site $k$, the intercept vector $\boldsymbol b=(b_1,\ldots,b_q)^{\mathsf T}$
is estimated locally for each candidate slope. The smoothed loss and its profile version are
\begin{equation}
 L_{k,h}(\bbeta,\boldsymbol{b})=
 \frac{1}{n_kq}\sum_{i=1}^{n_k}\sum_{\ell=1}^q
 \rho_{\tau_\ell,h}(Y_{ki}-b_\ell-\bx_{ki}^{\mathsf T}\bbeta),
 \qquad
 \widehat Q_{k,h}(\bbeta)=\min_{\boldsymbol{b}\in\R^q}L_{k,h}(\bbeta,\boldsymbol{b}).      \label{eq:profile}
\end{equation}
For a fixed candidate $\bbeta$, the minimization in (\ref{eq:profile}) separates over the
$q$ quantile levels. Each local intercept solves
$\sum_{i=1}^{n_k}\psi_{\tau_\ell,h}
(Y_{ki}-\widehat b_{k\ell}(\bbeta)-\bx_{ki}^{\mathsf T}\bbeta)=0$ and is then
substituted into $L_{k,h}$. Hence
$\widehat Q_{k,h}(\bbeta)=L_{k,h}\bigl(\bbeta,
\widehat{\boldsymbol b}_k(\bbeta)\bigr)$ is a slope-only objective. The intercepts
are re-estimated whenever the slope changes and remain local. Adding a constant
to every response at a site shifts its fitted intercepts by the same constant
and leaves its profiled loss unchanged.

Let $\widehat{\bg}_k(\bbeta)=\nabla\widehat Q_{k,h}(\bbeta)$ and
$\widehat{\bA}_k(\bbeta)=\nabla^2\widehat Q_{k,h}(\bbeta)$. Since the intercept
score is zero at the profile solution, the envelope theorem gives
$\widehat{\bg}_k(\bbeta)=\nabla_{\bbeta}L_{k,h}
\bigl(\bbeta,\widehat{\boldsymbol b}_k(\bbeta)\bigr)$. For curvature, partition
the full Hessian at this point using the ordering
$(\boldsymbol b,\bbeta)$:
\[
 \begin{pmatrix}\widehat{\bD}_k&\widehat{\bC}_k\\
 \widehat{\bC}_k^{\mathsf T}&\widehat{\bB}_k\end{pmatrix},
\]
where the blocks are the intercept--intercept, intercept--slope, and slope--slope
blocks. Differentiating the local first-order conditions gives
$\partial\widehat{\boldsymbol b}_k(\bbeta)/\partial\bbeta^{\mathsf T}
=-\widehat{\bD}_k^{-1}\widehat{\bC}_k$, which yields the profiled Hessian
as the Schur complement
\begin{equation}
 \widehat{\bA}_k=\widehat{\bB}_k-
 \widehat{\bC}_k^{\mathsf T}\widehat{\bD}_k^{-1}\widehat{\bC}_k.       \label{eq:schur}
\end{equation}
Thus, \eqref{eq:schur} accounts for how the fitted intercepts change with the slope. The Hessian block $\widehat{\bB}_k$ alone holds the intercepts fixed and generally differs from this profile curvature.

\subsection{Local curvature calibration}

The following scalar example illustrates how curvature heterogeneity can let an intermediate Byzantine report survive trimming. With three workers and $1/3\leq\gamma<1/2$, the rule removes one report from each tail.
\begin{proposition}[Attack displacement under unequal curvature]\label{prop:rawattack}
Consider two honest scalar profile losses $Q_1,Q_2$ and one Byzantine worker reporting any $v\in\R$. Suppose $Q_k'(\beta^*)=0$, $a_k=Q_k''(\beta^*)>0$, and $Q_k''$ is $L_k$-Lipschitz near $\beta^*$. At $\beta=\beta^*+e$ in this neighborhood, set $g_k=Q_k'(\beta)$ and $\bar g_H=(g_1+g_2)/2$. If $v$ lies strictly between $g_1$ and $g_2$, trimming one value from each tail removes both honest reports and retains $v$. The worst displacement from the honest-score average satisfies
\[
 \sup_{v\in\R}|\operatorname{median}(g_1,g_2,v)-\bar g_H|
 =\frac{|g_1-g_2|}{2}
 \geq\frac{|a_1-a_2|}{2}|e|-\frac{L_1+L_2}{4}e^2.
\]
For ideal messages $z_k=\beta-a_k^{-1}g_k$, instead,
\[
 \sup_{v\in\R}|\operatorname{median}(z_1,z_2,v)-\beta^*|
 =\max_{k=1,2}|z_k-\beta^*|
 \leq\max_{k=1,2}\frac{L_k}{2a_k}e^2.
\]
\end{proposition}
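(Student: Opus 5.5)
The proof splits into three parts: a median computation that holds for any real numbers, a Taylor expansion of the raw scores $g_k$, and a Taylor expansion of the calibrated messages $z_k$.

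\textbf{Trimming and the median.} With three values and one trimmed from each tail, the retained value is the median. If $v$ lies strictly between $g_1$ and $g_2$, then $v$ is the middle order statistic, so both honest reports are the extremes and are removed. For the supremum, write $g_{(1)}\le g_{(2)}$ for the ordered honest scores. For every $v$, $\operatorname{median}(g_1,g_2,v)$ equals the projection of $v$ onto $[g_{(1)},g_{(2)}]$. Hence
\[
\operatorname{median}(g_1,g_2,v)-\bar g_H\in\bigl[-|g_1-g_2|/2,\;|g_1-g_2|/2\bigr],
\]
and the endpoints are attained at $v=g_{(1)}$ or $v=g_{(2)}$. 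This gives the equality $\sup_v|\cdot|=|g_1-g_2|/2$. The same projection argument, applied with $z_{(1)}\le z_{(2)}$, gives
\[
\sup_v|\operatorname{median}(z_1,z_2,v)-\beta^*|=\max\{|z_{(1)}-\beta^*|,|z_{(2)}-\beta^*|\},
\]
because the function $|x-\beta^*|$ is maximized over an interval at an endpoint.

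\textbf{Lower bound for the raw scores.} Since $Q_k'(\beta^*)=0$, Taylor's theorem with integral remainder gives
\[
g_k=a_ke+r_k,\qquad r_k=\int_0^1\{Q_k''(\beta^*+te)-a_k\}\,e\,dt .
\]
The Lipschitz condition then yields $|r_k|\le L_k|e|\int_0^1 t|e|\,dt=L_ke^2/2$. By the reverse triangle inequality,
\[
|g_1-g_2|\ge|a_1-a_2||e|-|r_1|-|r_2|\ge|a_1-a_2||e|-\frac{L_1+L_2}{2}e^2 .
\]
Dividing by two gives the stated lower bound.

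\textbf{Upper bound for the calibrated messages.} Using the same expansion,
\[
z_k-\beta^*=e-a_k^{-1}(a_ke+r_k)=-r_k/a_k,
\]
so $|z_k-\beta^*|\le L_ke^2/(2a_k)$. Combining this with the supremum identity from the first step completes the argument.

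The only delicate point is the supremum identity: the trimming rule must be identified with the median, and the extremal $v$ must be shown to be attained or approached at the honest endpoints. The cases $v$ at or outside $[g_{(1)},g_{(2)}]$ have to be handled, not only the strict-betweenness case. The Taylor bounds themselves are routine, given that $e$ stays in the neighborhood where the Lipschitz condition holds.
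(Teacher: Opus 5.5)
Your proposal is correct and follows essentially the same route as the paper's proof: the median ranges exactly over the honest interval (your projection description makes this explicit), Taylor's formula gives $g_k=a_ke+r_k$ with $|r_k|\le L_ke^2/2$, and $z_k-\beta^*=-r_k/a_k$. Your integral-remainder derivation and explicit handling of $v$ outside the honest interval just spell out steps the paper states briefly.
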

For $a_1\ne a_2$ and sufficiently small nonzero $e$, unequal curvature creates
a first-order difference between honest scores. The first bound concerns
worst-case displacement from their average; the second bounds parameter error
relative to $\beta^*$. Ideal calibration removes the first-order term.
A retained Byzantine report may be harmless or helpful in one realization, but
it cannot be trusted across rounds: the worker may adapt its message, and the
server cannot distinguish a helpful report from one chosen to bias a later
update.
Appendix~\ref{sup:aggregation} provides the proof and a smoothed CQR example satisfying the
stationarity condition $Q_k'(\beta^*)=0$.

At round $t$, after receiving the server iterate $\bbeta^{(t)}$, honest worker $k$ forms
\begin{equation}
\bz_k^{(t)}=\bbeta^{(t)}-\widehat{\bTheta}_k^{(t)}
 \widehat{\bg}_k(\bbeta^{(t)}),                         \label{eq:message}
\end{equation}
where $\widehat{\bTheta}_k^{(t)}$ controls the residual $\bI_p-\widehat{\bTheta}_k^{(t)}\widehat{\bA}_k(\bbeta^{(t)})$ on sparse directions. Multiplying the local profile score by this approximate inverse is a Newton-type inverse-curvature calibration: it puts site-specific scores on a common parameter scale before aggregation. In low or moderate dimensions, we use
$\widehat{\bTheta}_k^{(t)}=\{\widehat{\bA}_k(\bbeta^{(t)})+\kappa\bI_p\}^{-1}$ with a small ridge $\kappa$. For $\kappa>0$, convexity makes the profile Hessian positive semidefinite, so the ridge matrix is positive definite even when the Hessian is singular. Profiling one intercept per quantile gives $\operatorname{rank}\{\widehat{\bA}_k(\bbeta^{(t)})\}\leq\min\{p,q(n_k-1)\}$. Thus $p\geq n_k$ alone does not imply singularity; the bound implies it when $p>q(n_k-1)$, while local collinearity may cause it earlier. For large $p$, the row-wise Dantzig construction below avoids a dense inverse and targets sparse directions. An $\ell_1$-constrained approximate inverse can instead be constructed row by row using the Dantzig-type program \citep{candes2007dantzig}
\begin{equation}
 \widehat{\btheta}_{kj}^{(t)}=\arg\min_{\btheta\in\R^p}\|\btheta\|_1
 \quad\text{subject to}\quad
 \|\widehat{\bA}_k(\bbeta^{(t)})\btheta-\boldsymbol{e}_j\|_\infty\leq\nu_k,          \label{eq:dantzig}
\end{equation}
where $\boldsymbol{e}_j\in\R^p$ is the $j$th canonical basis vector. Following
\citet{cai2011clime}, set row $j$ of $\widehat{\bTheta}_k^{(t)}$ to
$(\widehat{\btheta}_{kj}^{(t)})^{\mathsf T}$. Symmetry of the Hessian then gives
\[
 \|\bI_p-\widehat{\bTheta}_k^{(t)}\widehat{\bA}_k(\bbeta^{(t)})\|_{\max}\leq\nu_k.
\]
Here $\|M\|_{\max}=\max_{j,l}|M_{jl}|$ and
$\|M\|_{\infty\to\infty}=\max_j\sum_l|M_{jl}|$. Subsequent symmetrization
need not preserve these constraints and is not used. Appendix~\ref{sup:notation} proves
the row-wise bound.

The program in \eqref{eq:dantzig} supplies the approximate inverse for
\eqref{eq:message}. To see its effect, let
$\boldsymbol e^{(t)}=\bbeta^{(t)}-\bbeta^*$. Taylor expansion gives
\[
 \bz_k^{(t)}-\bbeta^*
 =\{\bI_p-\widehat{\bTheta}_k^{(t)}\widehat{\bA}_k(\bbeta^{(t)})\}\boldsymbol e^{(t)}
 -\widehat{\bTheta}_k^{(t)}\widehat{\bg}_k(\bbeta^*)
 +\boldsymbol R_{k,t},
\]
where Assumption~\ref{ass:calibration} gives
$\|\boldsymbol R_{k,t}\|_\infty\leq M_A L_A\|\boldsymbol e^{(t)}\|_\infty^2$
on its event and neighborhood. The first term is the residual from approximate
inversion. If the empirical profile Hessian is invertible and its exact inverse
satisfies the same row-norm bound, this term vanishes and the Newton message satisfies
\begin{equation}
 \bz_k^{(t)}-\bbeta^*=-\widehat{\bA}_k(\bbeta^{(t)})^{-1}\widehat{\bg}_k(\bbeta^*)
 +O(\|\boldsymbol e^{(t)}\|_\infty^2).                         \label{eq:cancellation}
\end{equation}
Equation~\eqref{eq:cancellation} describes exact-Newton cancellation. With an
approximate inverse, Assumption~\ref{ass:calibration} bounds the additional term
by $\mu_n\|\boldsymbol e^{(t)}\|_\infty$, giving the $\mu_n r$ contribution in
Lemma~\ref{lem:message}. Calibration does not equalize message variances;
heterogeneity still affects score distributions and remainder bounds.

\subsection{Robust aggregation and sparsity}

Once the local messages have been calibrated, the server aggregates them as follows. For $m$ scalars $v_1,\ldots,v_m$, let $\trim_\gamma(v_1,\ldots,v_m)$ be the mean after deleting the $\lfloor\gamma m\rfloor$ largest and smallest values. Since $|\Bcal|\leq\lfloor\alpha m\rfloor\leq\lfloor\gamma m\rfloor$ and $m-2\lfloor\gamma m\rfloor>0$, the trimming count covers the allowed faults and leaves a nonempty set. Applied coordinatewise to the messages, this gives
\[
 \widetilde{\bz}^{(t)}=\trim_\gamma\{\bz_1^{(t)},\ldots,\bz_m^{(t)}\},
 \qquad \alpha\leq\gamma<1/2.
\]
The server update is
\begin{equation}
 \bbeta^{(t+1)}=\mathcal S_{\lambda_t}(\widetilde{\bz}^{(t)}),
 \qquad
 \{\mathcal S_\lambda(v)\}_j=\operatorname{sign}(v_j)(|v_j|-\lambda)_+. \label{eq:update}
\end{equation}
Here $(u)_+=\max(u,0)$. To initialize, each honest worker fits an
$\ell_1$-penalized profiled CQR criterion and transmits its coefficient vector;
Byzantine workers may send arbitrary vectors. The server takes the
coordinatewise median and soft-thresholds it at $\lambda_{\rm init}$ to obtain
$\bbeta^{(0)}$. The local penalty is used for this pilot fit. Refinement instead
uses the unpenalized profile score in \eqref{eq:message}, with sparsity imposed
by the server after aggregation.

\begin{algorithm}[t]
\caption{Heterogeneity-calibrated robust CQR (HC-RCQR)}\label{alg:hcrcqr}
\begin{algorithmic}[1]
\Require Quantiles $\{\tau_\ell\}_{\ell=1}^q$, bandwidth $h$, rounds $T\geq1$, trim fraction $\gamma$, local penalties, initial threshold $\lambda_{\rm init}$, thresholds $\{\lambda_t\}_{t=0}^{T-1}$, inverse-Hessian tuning $\kappa$ or $\nu_k$
\State Honest workers compute penalized local profile-CQR estimates; Byzantine workers may send arbitrary vectors. The server takes the coordinatewise median of all reports and soft-thresholds at $\lambda_{\rm init}$ to obtain $\bbeta^{(0)}$.
\For{$t=0,\ldots,T-1$}
  \State The server broadcasts $\bbeta^{(t)}$.
  \State Honest worker $k$ profiles $\widehat{\boldsymbol b}_k(\bbeta^{(t)})$, computes $\widehat{\bg}_k$ and $\widehat{\bA}_k$, forms $\widehat{\bTheta}_k$, and returns $\bz_k^{(t)}$ in \eqref{eq:message}.
  \State Byzantine workers return arbitrary $p$-vectors.
  \State The server computes the coordinatewise trimmed mean of the $m$ messages and applies \eqref{eq:update}.
\EndFor
\State \Return $\widehat{\bbeta}=\bbeta^{(T)}$.
\end{algorithmic}
\end{algorithm}

Honest sites send one $p$-vector at initialization and per round, for
$O\{mp(T+1)\}$ transmitted numbers; intercepts and Hessians stay local.
The theory requires the initializer to belong to the sparse neighborhood
$\mathcal C(r_0)$ in Assumption~\ref{ass:curvature}. Appendix~\ref{sup:iteration} gives a
deterministic median-and-thresholding guarantee under a bound on honest pilot
errors; Section~\ref{sec:theory} states the corresponding probability requirement.

\subsection{Practical implementation}\label{sec:implementation}

The experiments use quantiles $\{0.25,0.50,0.75\}$, ridge inversion, and fixed tuning constants. For the row-wise construction in \eqref{eq:dantzig}, numerical feasibility verifies the imposed residual constraint; the row-norm and Hessian-approximation conditions in Assumption~\ref{ass:calibration} also enter the theory. Before the run, the designer supplies a contamination upper bound $\alpha<1/2$ and chooses $\gamma\in[\alpha,1/2)$. The server does not estimate this bound from messages. The simulations use $\gamma=\min\{\alpha+0.05,0.40\}$.

For the reported simulations, use $h=0.24$, $\kappa=0.08$, four refinement
rounds, local penalty $0.24\sqrt{\log(p)/n_k}$, threshold floor
$\lambda_{\rm floor}=0.035+0.035\alpha$, initialization threshold
$\lambda_{\rm init}=1.5\lambda_{\rm floor}$, and
$\lambda_t=(1+0.5^{t+1})\lambda_{\rm floor}$. The numerical implementation caps the Euclidean norm of each Newton step at $2.5\max\{1,\|\bbeta^{(t)}\|_2\}$; the theory analyzes uncapped steps. The application uses its own
sample-size and bandwidth values, which are listed in Appendix~\ref{sup:bike}.

\section{Conditional nonasymptotic analysis}\label{sec:theory}

Detailed proofs of results not proved below are provided in Appendices~\ref{sup:notation}--\ref{sup:identification}.
We next formalize the calibration argument. Fix a failure probability
$\delta\in(0,1)$. In the rate notation, $x\lesssim y$ means that
$x\leq Cy$ for a constant $C$ independent of the relevant sample-size and
dimensional parameters, while $x\asymp y$ means that both
$x\lesssim y$ and $y\lesssim x$ hold. The usual $O(\cdot)$ notation has the
corresponding upper-bound meaning. Since site sizes may differ, let
$N_H=\sum_{k\in\Hcal}n_k$, $n_{\min}=\min_{k\in\Hcal}n_k$, and
$n_{\max}=\max_{k\in\Hcal}n_k$. To keep the subscripts concise, the local-size
index $n$ in $a_{n,h}$ and $\varepsilon_{m,n,p}$ denotes the scale $n_{\min}$
in the heterogeneous case. Let $a_{n,h}$ denote the calibrated profile-score
offset bound specified below, and define
\begin{equation}
 \varepsilon_{m,n,p}(\delta)=a_{n,h}+
 \sqrt{\frac{\log(2p/\delta)}{N_H}}+
 \frac{\gamma}{1-2\gamma}
 \sqrt{\frac{\log(2pm/\delta)}{n_{\min}}}.              \label{eq:epsilon}
\end{equation}
For notational brevity, write $\varepsilon_{m,n,p}=\varepsilon_{m,n,p}(\delta)$ below. For an asymptotic sequence with $mp\to\infty$, one concrete convention is $\delta=(mp)^{-c_\delta}$ for a fixed $c_\delta>0$; both logarithmic factors in \eqref{eq:epsilon} are then of order $\log(mp)$.
The second term bounds fluctuation of the honest-worker average at the pooled sample-size scale; the third accounts for trimming and contamination. When
$\gamma\asymp\alpha$ and $\gamma\leq\bar\gamma<1/2$, the third term has the
scale
\[
\alpha\sqrt{\frac{\log(2pm/\delta)}{n_{\min}}},
\]
or, up to logarithmic factors, $\alpha/\sqrt{n_{\min}}$
\citep{yin2018optimal}. Under the second-order offset condition in
Assumption~\ref{ass:calibration}, $a_{n,h}\lesssim h^2$.

Assumptions~\ref{ass:sampling}--\ref{ass:curvature} collect familiar sampling, quantile-density, sparsity, and local-curvature requirements. The common slope and the bound $\alpha<1/2$ are structural identification and robust-aggregation conditions, respectively. Site-size comparability is used only to express an unweighted honest-worker average at the pooled scale; Appendix~\ref{sup:notation} gives the corresponding expression without it. The inverse row-sum bound in Assumption~\ref{ass:curvature} is stronger than eigenvalue control and is needed for coordinatewise error bounds. Assumption~\ref{ass:calibration} is instead the principal high-probability uniform calibration condition: it requires the approximate inverse, empirical Hessian, and profiled score bounds to hold simultaneously over all honest sites and candidate iterates in $\mathcal C(r_0)$.

\begin{assumption}[Sampling and common slope]\label{ass:sampling}
Within each honest site the observations are independent and identically distributed, and honest sites are mutually independent. They satisfy \eqref{eq:model}, $\mathbb E|Y_{ki}|<\infty$, and
$\sup_{\|\boldsymbol u\|_2=1}\|\boldsymbol u^{\mathsf T}\bx_{ki}\|_{\psi_2}\leq K_x$
uniformly over honest $k$, where $\|Z\|_{\psi_2}=\inf\{c>0:\mathbb E\exp(Z^2/c^2)\leq2\}$. Site sizes are comparable, $n_{\max}/n_{\min}\leq C_n$, and the common slope has $s$ nonzero entries. Byzantine messages may be arbitrary and history dependent, with $|\Bcal|\leq\alpha m$ and $\alpha\leq\gamma\leq\bar\gamma<1/2$.
\end{assumption}

\begin{assumption}[Quantile regularity]\label{ass:density}
Let $R_{ki}=Y_{ki}-\bx_{ki}^{\mathsf T}\bbeta^*$, and let $f_k(u\mid\bx)$ denote the conditional density of $R_{ki}$ at site $k$. For constants $c_f,\underline f,\overline f,L_f>0$, uniformly over honest $k$, selected $\ell$, almost every covariate value, and $|v|\leq c_f$,
\[
 \underline f\leq f_k(b^*_{k\ell}+v\mid\bx)\leq\overline f,
 \qquad |\partial_v f_k(b^*_{k\ell}+v\mid\bx)|\leq L_f.
\]
The quantile levels remain in a compact subset of $(0,1)$.
\end{assumption}

\begin{assumption}[Profile curvature]\label{ass:curvature}
Let
$Q_{k,h}(\bbeta)=\min_{\boldsymbol b\in\R^q}\mathbb E[q^{-1}\sum_{\ell=1}^q
\rho_{\tau_\ell,h}(Y_k-b_\ell-\bx_k^{\mathsf T}\bbeta)]$
be the population profiled loss, and write $\bA_k(\bbeta)=\nabla^2Q_{k,h}(\bbeta)$, with $h$ suppressed from the notation. On the sparse neighborhood
\[
 \mathcal C(r_0)=\{\bbeta:\|\bbeta-\bbeta^*\|_\infty\leq r_0,\ 
 |\supp(\bbeta)\cup\Scal|\leq c_s s\},
\]
where $r_0>0$ and $c_s\geq1$, all $\bA_k(\bbeta)$ have eigenvalues in $[c_A,C_A]$, these matrix-valued maps are uniformly Lipschitz in the operator norm induced by $\ell_\infty$, and
$\sup_{k\in\Hcal,\,\bbeta\in\mathcal C(r_0)}\|\bA_k(\bbeta)^{-1}\|_{\infty\to\infty}\leq M_A$. The constants are common across sites, although the matrices themselves may differ.
\end{assumption}

\begin{assumption}[Uniform local calibration and score control]\label{ass:calibration}
For a matrix $\boldsymbol M$, define its restricted action norm by
$\|\boldsymbol M\|_{\infty,r}=\sup\{\|\boldsymbol M\boldsymbol v\|_\infty:
\|\boldsymbol v\|_\infty\leq1,\ |\supp(\boldsymbol v)|\leq r\}$.
For the specified $\delta$, suppose that there is an event
$\mathcal E_\delta$ with $\mathbb P(\mathcal E_\delta)\geq1-\delta$ on which
the following three requirements hold uniformly over honest sites and
$\bbeta,\bbeta'\in\mathcal C(r_0)$. \emph{First}, for a deterministic $\mu_n\geq0$,
the local inverse-calibration requirement is
\begin{equation}
 \big\|\bI_p-\widehat{\bTheta}_k(\bbeta)
 \widehat{\bA}_k(\bbeta)\big\|_{\infty,c_s s}\leq\mu_n,
 \qquad \|\widehat{\bTheta}_k(\bbeta)\|_{\infty\to\infty}\leq 2M_A. \label{eq:calibrationevent}
\end{equation}
\emph{Second}, for a constant $L_A>0$, the empirical profile Hessian is uniformly Lipschitz:
\[
 \|\widehat{\bA}_k(\bbeta)-\widehat{\bA}_k(\bbeta')\|_{\infty\to\infty}
 \leq L_A\|\bbeta-\bbeta'\|_\infty.
\]
Every segment joining $\bbeta^*$ to a point in $\mathcal C(r_0)$ remains
in $\mathcal C(r_0)$ by its definition, so the same bound applies along that
segment. \emph{Third}, uniformly over the same neighborhood, write
$-\widehat{\bTheta}_k(\bbeta)\widehat{\bg}_k(\bbeta^*)
=\boldsymbol d_{k,h}(\bbeta)+\boldsymbol\xi_k(\bbeta)$.
There are deterministic $a_{n,h}\geq0$ and $C_\xi>0$ such that, uniformly on the same neighborhood,
\begin{align*}
 \max_{k\in\Hcal}\|\boldsymbol d_{k,h}(\bbeta)\|_\infty&\leq a_{n,h},\\
 \left\|m_H^{-1}\sum_{k\in\Hcal}\boldsymbol\xi_k(\bbeta)\right\|_\infty
 &\leq C_\xi\sqrt{\frac{\log(2p/\delta)}{N_H}},\\
 \max_{k\in\Hcal}\|\boldsymbol\xi_k(\bbeta)\|_\infty
 &\leq C_\xi\sqrt{\frac{\log(2pm/\delta)}{n_{\min}}}.
\end{align*}
Here $\boldsymbol d_{k,h}$ is a bounded offset; the decomposition does not
require it to equal an expectation or $\boldsymbol\xi_k$ to be mean zero.
For the second-order specialization used below, we additionally assume
$a_{n,h}\leq C_bh^2$ for a constant $C_b>0$.
\end{assumption}

For ridge inversion, with all quantities evaluated at the same $\bbeta$, the
identity $\bI-\widehat{\bTheta}_k\widehat{\bA}_k
=\kappa\widehat{\bTheta}_k$ and the second bound in
\eqref{eq:calibrationevent} allow $\mu_n$ to be taken as $2\kappa M_A$.
For the row-wise construction in \eqref{eq:dantzig}, Appendix~\ref{sup:notation} gives
a deterministic sufficient condition under which feasibility yields the
inverse row-sum bound and allows $\mu_n$ to be taken as
$c_s s\max_k\nu_k$.

Assumption~\ref{ass:calibration} is an explicit condition for the results below; ridge computability or Dantzig feasibility alone does not verify this uniform event.

\begin{lemma}[Calibrated robust message bound]\label{lem:message}
Under Assumptions~\ref{ass:sampling}--\ref{ass:calibration}, there is a constant $C>0$ such that, on $\mathcal E_\delta$, simultaneously for every $\bbeta\in\mathcal C(r_0)$ with $r=\|\bbeta-\bbeta^*\|_\infty$,
\begin{equation}
 \left\|\trim_\gamma\{\bz_1(\bbeta),\ldots,\bz_m(\bbeta)\}-\bbeta^*\right\|_\infty
 \leq C\left\{\varepsilon_{m,n,p}(\delta)+\mu_n r+r^2\right\}.       \label{eq:messagebound}
\end{equation}
The bound is uniform over all Byzantine strategies satisfying the fraction constraint.
\end{lemma}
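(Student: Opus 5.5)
The plan is to split each honest message into a common part, a small deterministic offset, and a centered fluctuation. I then control the coordinatewise trimmed mean with a standard deterministic lemma for trimmed means.

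\textbf{Step 1: expansion of honest messages.} Fix $\bbeta\in\mathcal C(r_0)$, put $\boldsymbol e=\bbeta-\bbeta^*$ and $r=\|\boldsymbol e\|_\infty$. For honest $k$ I use the Taylor identity stated before \eqref{eq:cancellation}:
\[
\bz_k(\bbeta)-\bbeta^*=\{\bI_p-\widehat{\bTheta}_k\widehat{\bA}_k(\bbeta)\}\boldsymbol e+\boldsymbol d_{k,h}(\bbeta)+\boldsymbol\xi_k(\bbeta)+\boldsymbol R_k .
\]
To justify it, write
\[
\widehat{\bg}_k(\bbeta)-\widehat{\bg}_k(\bbeta^*)=\int_0^1\widehat{\bA}_k(\bbeta^*+t\boldsymbol e)\,dt\,\boldsymbol e .
\]
This is valid because the segment stays in $\mathcal C(r_0)$. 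Then
\[
\boldsymbol R_k=-\widehat{\bTheta}_k\int_0^1\{\widehat{\bA}_k(\bbeta^*+t\boldsymbol e)-\widehat{\bA}_k(\bbeta)\}\,dt\,\boldsymbol e .
\]
On $\mathcal E_\delta$, the Lipschitz bound and $\|\widehat{\bTheta}_k\|_{\infty\to\infty}\le 2M_A$ give $\|\boldsymbol R_k\|_\infty\le M_AL_Ar^2$. Since $|\supp(\boldsymbol e)|\le c_s s$, the restricted norm in \eqref{eq:calibrationevent} bounds the first term by $\mu_n r$.

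The result is a decomposition $\bz_k-\bbeta^*=\boldsymbol\xi_k+\boldsymbol\eta_k$ with $\|\boldsymbol\eta_k\|_\infty\le \Delta:=a_{n,h}+\mu_n r+M_AL_Ar^2$. By Assumption~\ref{ass:calibration} the fluctuations satisfy
\[
\Bigl\|m_H^{-1}\sum_{k\in\Hcal}\boldsymbol\xi_k\Bigr\|_\infty\le C_\xi\sqrt{\log(2p/\delta)/N_H},
\qquad
\max_k\|\boldsymbol\xi_k\|_\infty\le\sigma:=C_\xi\sqrt{\log(2pm/\delta)/n_{\min}} .
\]
All of this holds simultaneously over $\bbeta$, because the event is uniform.

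\textbf{Step 2: deterministic trimmed-mean lemma.} Fix a coordinate $j$ and write $x_k=z_{kj}-\beta^*_j$. I will show that for any values of the Byzantine entries,
\[
\Bigl|\trim_\gamma(x_1,\ldots,x_m)-m_H^{-1}\sum_{k\in\Hcal}x_k\Bigr|\le C\frac{\gamma}{1-2\gamma}\max_{k\in\Hcal}|x_k-\bar x_H| .
\]
Here $\bar x_H$ is the honest mean.

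Since $|\Bcal|\le\lfloor\gamma m\rfloor$, every retained value lies in $[\min_{\Hcal}x_k,\max_{\Hcal}x_k]$. Each retained Byzantine value therefore deviates from $\bar x_H$ by at most $D:=\max_{\Hcal}|x_k-\bar x_H|$.

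The trimmed mean averages $m-2g$ values, where $g=\lfloor\gamma m\rfloor$. These consist of the honest values minus at most $2g$ removed honest values, plus at most $b\le g$ Byzantine values. Writing the trimmed sum minus $(m-2g)\bar x_H$ as a sum of deviations, the honest retained deviations equal minus the removed honest deviations. So the total absolute deviation is at most $(2g+b)D\le 3gD$. Dividing by $m-2g\ge(1-2\gamma)m$ gives the factor $3\gamma/(1-2\gamma)$.

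\textbf{Step 3: combine.} Take $D\le 2\sigma+2\Delta$ and $|\bar x_H|\le C_\xi\sqrt{\log(2p/\delta)/N_H}+\Delta$. This gives
\[
|\trim-\beta_j^*|\lesssim\sqrt{\log(2p/\delta)/N_H}+\frac{\gamma}{1-2\gamma}\sigma+\Bigl(1+\frac{\gamma}{1-2\gamma}\Bigr)\Delta .
\]
Because $\gamma\le\bar\gamma<1/2$, the factor $1+\gamma/(1-2\gamma)$ is bounded. Taking the maximum over $j$ yields \eqref{eq:messagebound} with $\varepsilon_{m,n,p}(\delta)$ as in \eqref{eq:epsilon}. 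The constant is independent of the Byzantine strategy because Step 2 is deterministic.

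\textbf{Main obstacle.} The careful step is Step 2. The trimming is not symmetric in honest versus Byzantine reports, so I must get the counting of removed honest values right and keep the $\gamma/(1-2\gamma)$ factor multiplying only the worst-case honest spread, not the $N_H$-scale term. The $\Delta$ terms are simply absorbed with a bounded constant.
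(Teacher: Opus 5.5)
Your proof is correct and follows the paper's argument essentially step for step. You use the same integral-form Newton expansion with remainder $\mu_n r+M_AL_Ar^2$ (Lemma~\ref{sup:lem:newton}), and then the same deterministic trimmed-mean argument (Lemma~\ref{sup:lem:trim}): surviving Byzantine values lie in the honest range, and the retained sum equals the honest sum minus the removed honest values plus the surviving Byzantine ones. The only difference is that you center at the honest average rather than at $\beta_j^*$. This makes the honest-sum term vanish and gives the pooled fluctuation coefficient $1$ instead of $m_H/(m-2\lfloor\gamma m\rfloor)$, a harmless refinement that is absorbed into the constant anyway.
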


The proof of Lemma~\ref{lem:message} is provided in Appendix~\ref{sup:expansion}. The terms $r^2$ and $\mu_n r$ arise from Taylor expansion and approximate inversion. The offset $a_{n,h}$ includes smoothing and nuisance-estimation effects; heterogeneity also enters the calibration conditions and constants.

\begin{theorem}[Iterative estimation and sparsity]\label{thm:recursion}
Suppose Assumptions~\ref{ass:sampling}--\ref{ass:calibration} hold, and define
\begin{equation}
 B(r)=C\{\varepsilon_{m,n,p}(\delta)+\mu_n r+r^2\},                  \label{eq:Br}
\end{equation}
where $C$ is at least the constant in Lemma~\ref{lem:message}. Fix a finite horizon $T$ and deterministic radii $\bar r_t\geq0$. On $\mathcal E_\delta$, suppose that, for every $t=0,\ldots,T-1$, $\bbeta^{(t)}\in\mathcal C(r_0)$ and $\|\bbeta^{(t)}-\bbeta^*\|_\infty\leq\bar r_t$. If $\lambda_t\geq B(\bar r_t)$ for all such $t$, then the update \eqref{eq:update} obeys simultaneously
\begin{align}
 \supp(\bbeta^{(t+1)})&\subseteq\Scal,                                      \label{eq:noFP}\\
 \|\bbeta^{(t+1)}-\bbeta^*\|_\infty&\leq 2\lambda_t,                         \label{eq:linf}\\
 \|\bbeta^{(t+1)}-\bbeta^*\|_2&\leq 2\sqrt{s}\lambda_t,\qquad
 \|\bbeta^{(t+1)}-\bbeta^*\|_1\leq 2s\lambda_t.                              \label{eq:l2l1}
\end{align}
Thus, with probability at least $1-\delta$, these conclusions hold whenever the
stated iterate and threshold premises are satisfied. No additional union bound
over rounds is needed because $\mathcal E_\delta$ is uniform over $\mathcal C(r_0)$.
\end{theorem}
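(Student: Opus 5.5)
The plan is to argue deterministically on the event $\mathcal E_\delta$, with one round at a time. Fix $t\in\{0,\ldots,T-1\}$. By premise, $\bbeta^{(t)}\in\mathcal C(r_0)$ and $r_t:=\|\bbeta^{(t)}-\bbeta^*\|_\infty\leq\bar r_t$. Lemma~\ref{lem:message} holds on $\mathcal E_\delta$ simultaneously for every point of $\mathcal C(r_0)$ and every admissible Byzantine strategy. We may therefore apply it at the data- and history-dependent point $\bbeta=\bbeta^{(t)}$, which gives
\[
 \|\widetilde{\bz}^{(t)}-\bbeta^*\|_\infty\leq C\{\varepsilon_{m,n,p}(\delta)+\mu_n r_t+r_t^2\}=B(r_t)\leq B(\bar r_t)\leq\lambda_t .
\]
Two facts justify the chain. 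First, $B$ is nondecreasing on $[0,\infty)$ because $\mu_n\geq0$. Second, the constant $C$ in \eqref{eq:Br} is at least the constant of Lemma~\ref{lem:message}. This uniformity is the only probabilistic input. Since $\mathcal E_\delta$ does not depend on $t$, no union bound over rounds is needed, and the final probability statement follows from $\mathbb P(\mathcal E_\delta)\geq1-\delta$.

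The remaining steps are deterministic properties of soft thresholding. Write $\boldsymbol w=\widetilde{\bz}^{(t)}-\bbeta^*$, so that $\|\boldsymbol w\|_\infty\leq\lambda_t$.

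\emph{No false positives.} For $j\notin\Scal$ we have $\beta^*_j=0$, so $|\widetilde z_j^{(t)}|=|w_j|\leq\lambda_t$. Hence $\{\mathcal S_{\lambda_t}(\widetilde{\bz}^{(t)})\}_j=0$, which proves \eqref{eq:noFP}.

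\emph{Coordinatewise bound.} For every coordinate $j$ we use $|\{\mathcal S_\lambda(v)\}_j-v_j|\leq\lambda$. The triangle inequality then gives
\[
 |\beta^{(t+1)}_j-\beta^*_j|\leq|\{\mathcal S_{\lambda_t}(\widetilde{\bz}^{(t)})\}_j-\widetilde z^{(t)}_j|+|w_j|\leq2\lambda_t,
\]
which is \eqref{eq:linf}. Off $\Scal$ the error is in fact zero.

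\emph{Norm bounds.} The error vector $\bbeta^{(t+1)}-\bbeta^*$ is supported in $\Scal$, with $|\Scal|=s$, and each entry is bounded by $2\lambda_t$. Therefore
\[
 \|\cdot\|_2\leq\sqrt s\,\|\cdot\|_\infty\leq2\sqrt s\lambda_t,\qquad \|\cdot\|_1\leq s\|\cdot\|_\infty\leq2s\lambda_t,
\]
which gives \eqref{eq:l2l1}.

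Repeating this argument for each $t$ shows that the conclusions hold simultaneously over all rounds on $\mathcal E_\delta$.

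The main point to check is the first step. Lemma~\ref{lem:message} must be invoked at a random iterate that may depend on past Byzantine messages and on the honest data. This is legitimate only because the lemma's bound holds uniformly over $\mathcal C(r_0)$ and over all Byzantine strategies on one fixed event. The membership $\bbeta^{(t)}\in\mathcal C(r_0)$ is taken as a premise rather than derived, so no induction on neighbourhood membership is required here. Everything after that step is elementary.
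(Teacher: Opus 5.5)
Your proposal is correct and follows essentially the same route as the paper's proof. You apply Lemma~\ref{lem:message} at the iterate using the uniformity of $\mathcal E_\delta$, and then use the standard soft-thresholding facts to get support inclusion, the $2\lambda_t$ sup-norm bound, and the $\ell_2$/$\ell_1$ bounds from the support. You also spell out the monotonicity of $B$ (via $\mu_n\geq0$), which the paper uses only implicitly.
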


\noindent\textit{Proof.}
Let $\widetilde{\bz}^{(t)}$ be the trimmed message and set
$a_t=\|\widetilde{\bz}^{(t)}-\bbeta^*\|_\infty$. Lemma~\ref{lem:message} gives
$a_t\leq B(\|\bbeta^{(t)}-\bbeta^*\|_\infty)\leq B(\bar r_t)\leq\lambda_t$. If $j\notin\Scal$, then
$|\widetilde z_j^{(t)}|\leq\lambda_t$, and soft thresholding sets that coordinate to zero. For $j\in\Scal$, the scalar inequality
$|\mathcal S_\lambda(u)-v|\leq|u-v|+\lambda$ gives an error no larger than
$a_t+\lambda_t\leq2\lambda_t$. Finally, \eqref{eq:l2l1} follows from
\eqref{eq:noFP}. $\square$

Theorem~\ref{thm:recursion} is therefore a uniform one-step implication, not an unconditional assertion for arbitrary iterations. Corollary~\ref{cor:rate} verifies its neighborhood and radius premises inductively under the stated smallness conditions.

\begin{corollary}[Conditional terminal rate and support recovery]\label{cor:rate}
Under Assumptions~\ref{ass:sampling}--\ref{ass:calibration}, suppose $C\mu_n\leq1/8$,
 $Cr_0\leq1/8$, $4C\varepsilon_{m,n,p}\leq r_0$, and the initializer belongs to $\mathcal C(r_0)$. Define the deterministic radii $\bar r_0=r_0$ and
$\bar r_{t+1}=2B(\bar r_t)$, and take $\lambda_t=B(\bar r_t)$. On $\mathcal E_\delta$, all iterates remain in $\mathcal C(r_0)$ and satisfy $\|\bbeta^{(t)}-\bbeta^*\|_\infty\leq\bar r_t$. After
$T=\max\{1,\lceil\log_2\{r_0/(4C\varepsilon_{m,n,p})\}\rceil\}$ rounds,
the sup-norm error is at most $8C\varepsilon_{m,n,p}$. In particular, for
uniformly bounded $C$,
\begin{equation}
 \|\widehat{\bbeta}-\bbeta^*\|_\infty=O(\varepsilon_{m,n,p}),\quad
 \|\widehat{\bbeta}-\bbeta^*\|_2=O(\sqrt{s}\,\varepsilon_{m,n,p}),\quad
 \|\widehat{\bbeta}-\bbeta^*\|_1=O(s\,\varepsilon_{m,n,p}).          \label{eq:rates}
\end{equation}
If $\min_{j\in\Scal}|\beta_j^*|>2B(\bar r_{T-1})$, then
$\supp(\widehat{\bbeta})=\Scal$.
\end{corollary}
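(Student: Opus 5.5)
The plan is a direct induction on $t$ that feeds Theorem~\ref{thm:recursion} its own conclusions. The two premises of that theorem at round $t$ are $\bbeta^{(t)}\in\mathcal C(r_0)$ and $\|\bbeta^{(t)}-\bbeta^*\|_\infty\le\bar r_t$. Both will be verified from the output of round $t-1$, together with the deterministic recursion $\bar r_{t+1}=2B(\bar r_t)$. Everything is carried out on the single event $\mathcal E_\delta$, so no union bound over rounds is needed. The base case $t=0$ holds by assumption: the initializer lies in $\mathcal C(r_0)$, and $\bar r_0=r_0$ matches the sup-norm radius of that set.

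First I would control the radii deterministically. If $\bar r_t\le r_0$, then $C\bar r_t\le Cr_0\le1/8$, and also $C\mu_n\le1/8$. Hence
\[
\bar r_{t+1}=2C\varepsilon_{m,n,p}+2C\mu_n\bar r_t+2C\bar r_t^{\,2}
\le 2C\varepsilon_{m,n,p}+\tfrac14\bar r_t+\tfrac14\bar r_t
=2C\varepsilon_{m,n,p}+\tfrac12\bar r_t .
\]
Using $4C\varepsilon_{m,n,p}\le r_0$, this gives $\bar r_{t+1}\le r_0/2+r_0/2=r_0$, so by induction every $\bar r_t\le r_0$. Unrolling the contraction gives $\bar r_t\le 2^{-t}r_0+4C\varepsilon_{m,n,p}$.

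Next, suppose the premises hold at round $t$. Since $\lambda_t=B(\bar r_t)$, Theorem~\ref{thm:recursion} applies and yields $\supp(\bbeta^{(t+1)})\subseteq\Scal$ and $\|\bbeta^{(t+1)}-\bbeta^*\|_\infty\le2\lambda_t=\bar r_{t+1}\le r_0$. The support inclusion gives $|\supp(\bbeta^{(t+1)})\cup\Scal|=s\le c_s s$, so $\bbeta^{(t+1)}\in\mathcal C(r_0)$. This closes the induction for all $t\le T$.

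At the horizon, the choice of $T$ gives $2^{-T}r_0\le 4C\varepsilon_{m,n,p}$. Hence $\|\widehat\bbeta-\bbeta^*\|_\infty\le\bar r_T\le 8C\varepsilon_{m,n,p}$. The $\ell_2$ and $\ell_1$ rates in \eqref{eq:rates} then follow from $\supp(\widehat\bbeta-\bbeta^*)\subseteq\Scal$, which holds because $T\ge1$, via $\|\cdot\|_2\le\sqrt s\|\cdot\|_\infty$ and $\|\cdot\|_1\le s\|\cdot\|_\infty$ on $s$-sparse vectors.

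For support recovery, take $j\in\Scal$. By \eqref{eq:linf} at round $T-1$,
\[
|\widehat\beta_j|\ge|\beta^*_j|-2\lambda_{T-1}=|\beta_j^*|-2B(\bar r_{T-1})>0,
\]
so $\Scal\subseteq\supp(\widehat\bbeta)$. Combined with \eqref{eq:noFP}, this gives equality.

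The step needing the most care is the radius recursion. It must simultaneously keep the iterates inside $\mathcal C(r_0)$ and deliver the geometric decay, and the $1/8$ smallness constants are exactly what make the combined factor $1/2$ work. A minor technical point is the degenerate case $\varepsilon_{m,n,p}=0$, where $T$ is undefined. Positivity of $\varepsilon_{m,n,p}$ follows from the logarithmic terms in \eqref{eq:epsilon} whenever $\delta<1$ and $N_H<\infty$, so this case does not arise.
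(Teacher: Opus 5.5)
Your proposal is correct and follows the paper's proof essentially step for step: you use the affine radius bound $\bar r_{t+1}\leq 2C\varepsilon_{m,n,p}+\bar r_t/2\leq r_0$, run induction through Theorem~\ref{thm:recursion} with support inclusion restoring membership in $\mathcal C(r_0)$, unroll to $\bar r_t\leq 4C\varepsilon_{m,n,p}+2^{-t}r_0$, and finish with the beta-min argument via \eqref{eq:linf}. The appendix version of the support step instead checks $|\widetilde z_j^{(T-1)}|>\lambda_{T-1}$ for $j\in\Scal$, which is equivalent, and your remarks on $T\geq1$ and $\varepsilon_{m,n,p}>0$ are harmless extra care.
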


\noindent\textit{Proof.}
Whenever $\bar r_t\leq r_0$, the stated constants give
$\bar r_{t+1}\leq2C\varepsilon_{m,n,p}+\bar r_t/2\leq r_0$.
Induction and Theorem~\ref{thm:recursion} therefore keep the iterates in the sparse neighborhood and yield
$\bar r_t\leq4C\varepsilon_{m,n,p}+2^{-t}r_0$. This proves \eqref{eq:rates} after the stated number of rounds. Moreover, \eqref{eq:noFP} excludes false positives, while the beta-min condition and \eqref{eq:linf} prevent any true coordinate from being zero at round $T$. $\square$

The results above use the uncapped update and the analytical threshold condition $\lambda_t\geq B(\bar r_t)$. The experiments use preset thresholds and cap large Newton steps for stability; Appendix~\ref{sup:repro} gives the resulting extra term. If $\mu_n=0$, the linear calibration remainder disappears.

Appendix~\ref{sup:iteration} proves that Algorithm~\ref{alg:hcrcqr}'s initializer belongs to $\mathcal C(r_0)$ whenever all honest local fits are within $a_0$ in sup norm, $\lambda_{\rm init}\geq a_0$, and $2\lambda_{\rm init}\leq r_0$. If that local event has probability at least $1-\delta_0$, the terminal conclusions hold with probability at least $1-\delta-\delta_0$. A fixed-dimensional scalar CQR example satisfying the calibration and initialization requirements is given in Appendix~\ref{sup:verification}.

When the sites are balanced, $n_k=n$, $\gamma\asymp\alpha$, $a_{n,h}\lesssim h^2$, the uniform envelopes remain bounded, and $\delta$ is fixed, the rate in \eqref{eq:rates} has the leading scale
\begin{equation}
 h^2+\sqrt{\frac{\log(2p)}{m_Hn}}+\alpha\sqrt{\frac{\log(2pm)}{n}},       \label{eq:balanced}
\end{equation}
under the smallness conditions of Corollary~\ref{cor:rate}. If $\delta=(mp)^{-c_\delta}$, both logarithmic factors are instead of order $\log(mp)$. In the uncontaminated case, $\alpha=0$, and without trimming, $h^2\lesssim\sqrt{\log(2p)/N_H}$ gives the pooled sparse rate on the stated events. By contrast, if $\alpha>0$ and the local sample size $n$ is fixed, the displayed contamination term does not vanish as the number of workers grows, consistent with the worker-contamination mechanism studied by \citet{yin2018optimal}. The displayed result is an upper bound and does not establish minimax optimality for heterogeneous CQR. It also does not imply asymptotic normality at the pooled root-$N_H$ rate when the Byzantine fraction is fixed. Valid inference requires additional conditions, for example a sufficiently fast vanishing contamination fraction, and is left for future work.

\section{Simulation study}\label{sec:sim}

\subsection{Design and competitors}

We designed the simulation to separate the effects of heterogeneity from those of malicious communication. Specifically, we used $m=20$ workers, $n=90$ observations per worker, $p=30$ predictors, and the five-sparse slope
\[
 \bbeta^*=(1.00,-0.82,0.66,-0.46,0.24,0,\ldots,0)^{\mathsf T}.
\]
At worker $k$, the covariate scales, an equicorrelated component, and a pre-standardization location shift varied smoothly with the phase $2\pi(k-1)/m$. Student-$t_3$ noise was standardized and then assigned a site-specific location and scale. A heterogeneity index $H\in\{0,0.5,0.8,1\}$ multiplied all between-site departures, while the common slope remained fixed as required by \eqref{eq:model}. Covariates were then standardized within each site; correlation and response-error differences remained. Appendix~\ref{sup:simulation} gives the generating equations.

We then considered two Byzantine strategies drawn from attack families used in Byzantine-learning evaluations \citep{allen-zhu2021byzantine,allouah2023mixing}. Under sign reversal, each malicious message equaled minus four times the coordinatewise median honest message, with a small perturbation when several workers were malicious. This is a scaled sign-flipping attack; the factor four is an experimental stress-test choice \citep{allen-zhu2021byzantine}. The heterogeneity-mimic attack was subtler and followed the near-center construction of the A Little Is Enough (ALIE) attack and the heterogeneity-aware mimic attack \citep{baruch2019alie,karimireddy2022bucketing}: each malicious value was placed $0.9$ empirical standard deviations from the honest coordinatewise median in the direction suggested by the honest mean. Thus these messages were designed to lie near the honest coordinatewise center, although the formula does not guarantee that every malicious coordinate remains within the realized honest range. Both attacks are constructed after observing the current honest messages and therefore represent omniscient stress tests.

HC-RCQR used the tuning specified in Section~\ref{sec:implementation}. We
compared it with four procedures: Oracle, an infeasible pooled estimator using
honest records and site-specific intercepts; Mean, which averages attacked local
CQR estimates; Med-L, which takes their coordinatewise median and hard-thresholds;
and RTM-score, which iterates a trimmed mean of raw scores without calibration.
RTM-score is included as a raw-score comparator; it is not a full reimplementation of \citet{chen2024byzantine}.
The procedures use different initializations and iteration budgets, so this is a
comparison of complete methods. Estimation errors use the fitted coefficients;
support recovery selects those with absolute value exceeding $0.08$. Results
average 50 independently seeded replications.

\subsection{Results}

Table~\ref{tab:sim} and Fig.~\ref{fig:sim} show the effects of heterogeneity and attacks. Under 20\% sign reversal, HC-RCQR's mean $\ell_2$ error increased from $0.152$ at $H=0$ to $0.191$ at $H=1$. At $H=1$, this error was about 52\% lower than Med-L and 72\% lower than RTM-score. Because the methods use different initializations and tuning parameters, this comparison does not isolate the effect of calibration.

The heterogeneity-mimic attack places messages near the honest coordinatewise median and was less damaging to the simple mean in this design. This setting evaluates estimation accuracy when malicious messages are close to honest reports. At $H=0.8$ and $\alpha=0.2$, HC-RCQR had mean $\ell_2$ error $0.116$, compared with $0.323$ for Mean, $0.321$ for Med-L, and $0.527$ for RTM-score. Its error was still larger than that of the infeasible Oracle; this gap may reflect the combined finite-sample costs of local summaries, regularization, and adversarial protection. The corresponding exact-support frequencies are shown in Fig.~\ref{fig:support}; HC-RCQR was at 1.00 (100\%) across the displayed settings, whereas exact-support recovery for Med-L and RTM-score declined as heterogeneity increased, and that of RTM-score also declined as contamination increased.

\begin{table}[t]
\caption{Simulation performance over 50 replications. Entries for $\ell_2$ and $\ell_1$ error are mean (Monte Carlo standard error); Exact support is the recovery frequency; Selected is the mean model size, both using the cutoff $0.08$. Oracle uses the honest records and is not implementable. Lower error and larger exact-support frequency are better}\label{tab:sim}
\centering
\small
\setlength{\tabcolsep}{2.8pt}
\begin{tabular}{llrrrr}
\toprule
Setting & Method & $\ell_2$ error & $\ell_1$ error & Exact support & Selected \\
\midrule
$H=0,\ \alpha=.2$ & Oracle & 0.059 (0.002) & 0.131 (0.005) & 1.00 & 5.00 \\
 & Mean & 1.545 (0.001) & 3.221 (0.003) & 0.00 & 0.00 \\
 & Med-L & 0.266 (0.003) & 0.590 (0.007) & 0.96 & 4.96 \\
 & RTM-score & 0.458 (0.005) & 1.011 (0.011) & 0.52 & 4.52 \\
 & HC-RCQR & 0.152 (0.002) & 0.333 (0.005) & 1.00 & 5.00 \\
\addlinespace[2pt]
$H=.5,\ \alpha=.2$ & Oracle & 0.071 (0.002) & 0.159 (0.006) & 1.00 & 5.00 \\
 & Mean & 1.546 (0.001) & 3.247 (0.004) & 0.00 & 0.00 \\
 & Med-L & 0.315 (0.004) & 0.697 (0.008) & 0.84 & 4.84 \\
 & RTM-score & 0.548 (0.005) & 1.206 (0.010) & 0.20 & 4.20 \\
 & HC-RCQR & 0.163 (0.003) & 0.351 (0.007) & 1.00 & 5.00 \\
\addlinespace[2pt]
$H=1,\ \alpha=.2$ & Oracle & 0.090 (0.003) & 0.203 (0.007) & 1.00 & 5.00 \\
 & Mean & 1.569 (0.002) & 3.320 (0.005) & 0.00 & 0.00 \\
 & Med-L & 0.396 (0.006) & 0.875 (0.014) & 0.34 & 4.34 \\
 & RTM-score & 0.689 (0.006) & 1.506 (0.014) & 0.10 & 4.08 \\
 & HC-RCQR & 0.191 (0.004) & 0.410 (0.009) & 1.00 & 5.00 \\
\addlinespace[2pt]
$H=.8,\ \alpha=.2$ (mimic) & Oracle & 0.078 (0.002) & 0.173 (0.005) & 1.00 & 5.00 \\
 & Mean & 0.323 (0.004) & 0.847 (0.009) & 0.96 & 4.96 \\
 & Med-L & 0.321 (0.005) & 0.703 (0.012) & 0.86 & 4.86 \\
 & RTM-score & 0.527 (0.005) & 1.149 (0.011) & 0.56 & 4.56 \\
 & HC-RCQR & 0.116 (0.004) & 0.264 (0.009) & 1.00 & 5.00 \\
\bottomrule
\end{tabular}
\end{table}

\begin{figure}[t]
\centering
\includegraphics[width=0.97\textwidth]{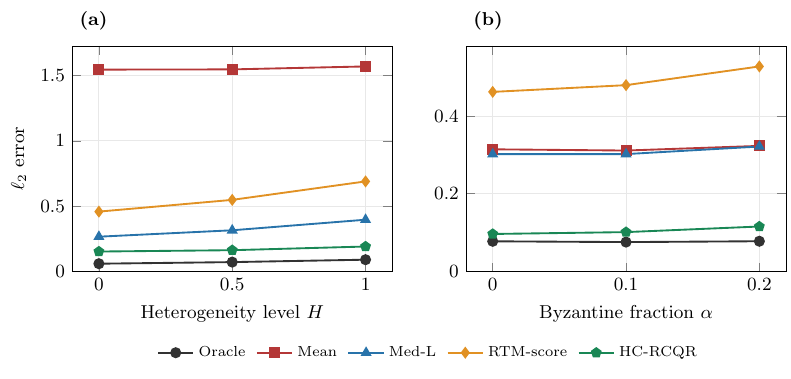}
\caption{Mean $\ell_2$ error over 50 replications. (a) Increasing honest-site heterogeneity under $\alpha=0.2$ sign reversal. (b) Increasing Byzantine fraction at $H=0.8$ under the heterogeneity-mimic attack. Oracle uses honest pooled records; RTM-score robustly aggregates raw scores without curvature calibration. Lines connect only the evaluated settings and are not fitted trends}\label{fig:sim}
\end{figure}
\begin{figure}[t]
\centering
\includegraphics[width=0.97\textwidth]{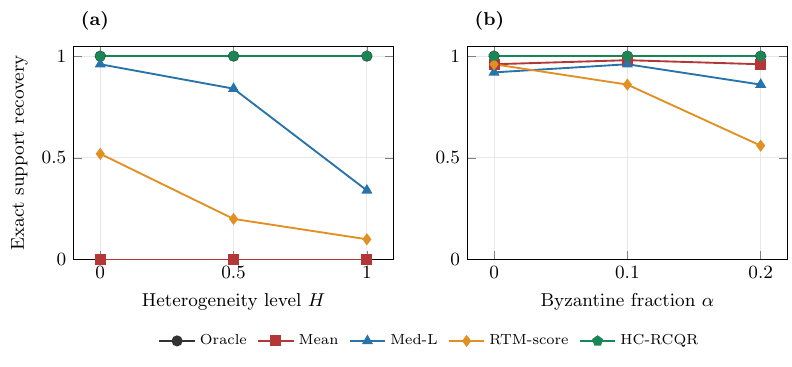}
\caption{Exact-support recovery frequency over 50 replications, selecting coefficients with absolute value exceeding $0.08$. (a) Increasing honest-site heterogeneity under $\alpha=0.2$ sign reversal. (b) Increasing Byzantine fraction at $H=0.8$ under the heterogeneity-mimic attack. Oracle uses honest records and is infeasible.}
\label{fig:support}
\end{figure}
Appendix~\ref{sup:matched} adds 50 paired replications per setting with identical initialization, thresholds, four-round budgets and caps. Only the preconditioner changes, comparing local ridge inversion with $I,4I,8I$. Local calibration has lower mean error than $I$ in all four settings, but under sign reversal at $H=1$, $4I$ attains $0.176$ versus $0.183$ for local calibration. The comparison supports an advantage over the unit step, but not uniformly over all scalar steps.

\section{Bike-demand application}\label{sec:bike}

The hourly file used from the UCI Bike Sharing data set \citep{fanaee2013bike}
contains 17,379 records without missing values. We took $\log(1+\mathtt{cnt})$
as the response and treated the 24 year--month blocks as non-identical sites.
Within each site, the earliest 80\% of observations formed the training sample
and the latest 20\% formed the test sample. Training records from later months
can occur after test records from earlier months, so this is a within-site
held-out evaluation rather than a global forecasting experiment.

The 38 predictors comprised four continuous variables (temperature, apparent
temperature, humidity, and wind speed), holiday and working-day indicators, and
one-hot indicators for hour, weekday, and weather category. To avoid target
leakage, we excluded the casual and registered counts, whose sum is the response,
and raw date identifiers. The continuous variables were standardized using pooled
training records only; this preprocessing was trusted, and only model messages
were attacked. Each site's training-residual quantiles supplied the three
prediction intercepts, so the comparison concerns transfer of the common slope
across months rather than site-specific response levels.

We evaluated one no-attack setting and two attack settings in which four of the 24 workers were Byzantine ($\alpha=1/6$). For each attacked setting, ten seeds independently selected the malicious sites. The method settings and numerical constants were fixed before test evaluation. We compared the five procedures on honest test sites using average composite pinball loss, median-prediction mean absolute error (MAE), and coverage of the interval between the fitted 0.25 and 0.75 quantiles. As in the simulation, Oracle uses records known to be honest and serves only as a benchmark.

As Table~\ref{tab:bike} shows, HC-RCQR was the distributed procedure closest to Oracle in composite loss and median MAE in all three settings. Under sign reversal, its composite loss was $0.244$, compared with $0.345$ for Med-L and $0.400$ for RTM-score; the corresponding median MAE values were $0.558$, $0.791$, and $0.905$. Under the mimic attack, HC-RCQR had a composite loss of $0.240$. At the same time, coverage of the central 50\% interval remained close to the nominal $0.50$ level for all methods, while HC-RCQR produced appreciably narrower intervals: the mean width was $0.857$ under sign reversal and $0.834$ under the mimic attack, compared with $1.257$ and $1.237$ for Med-L. In the no-attack setting, the reported standard errors round to zero because all 24 sites are honest and the deterministic implementation is unaffected by their permutation. Figure~\ref{fig:bikeperf} complements Table~\ref{tab:bike} by displaying the loss--interval-width comparison across the three application settings.

\begin{table}[t]
\caption{Honest-site test performance on the Bike Sharing data set. Entries for loss and MAE are mean (standard error) over ten Byzantine-site assignments; coverage is the empirical coverage of the fitted interquartile interval. Oracle is infeasible. Lower loss and MAE are better}\label{tab:bike}
\centering
\small
\setlength{\tabcolsep}{5pt}
\begin{tabular}{llrrr}
\toprule
Attack & Method & Composite loss & Median MAE & 50\% coverage \\
\midrule
None & Oracle & 0.222 (0.000) & 0.508 (0.000) & 0.489 \\
 & Mean & 0.340 (0.000) & 0.781 (0.000) & 0.503 \\
 & Med-L & 0.340 (0.000) & 0.779 (0.000) & 0.506 \\
 & RTM-score & 0.396 (0.000) & 0.899 (0.000) & 0.509 \\
 & HC-RCQR & 0.236 (0.000) & 0.540 (0.000) & 0.499 \\
\addlinespace[2pt]
Sign reversal & Oracle & 0.222 (0.001) & 0.506 (0.003) & 0.489 \\
 & Mean & 0.421 (0.001) & 0.956 (0.003) & 0.505 \\
 & Med-L & 0.345 (0.001) & 0.791 (0.003) & 0.502 \\
 & RTM-score & 0.400 (0.001) & 0.905 (0.003) & 0.504 \\
 & HC-RCQR & 0.244 (0.001) & 0.558 (0.003) & 0.502 \\
\addlinespace[2pt]
Heterogeneity-mimic & Oracle & 0.224 (0.001) & 0.512 (0.002) & 0.488 \\
 & Mean & 0.341 (0.001) & 0.782 (0.002) & 0.505 \\
 & Med-L & 0.341 (0.001) & 0.783 (0.002) & 0.507 \\
 & RTM-score & 0.398 (0.001) & 0.903 (0.002) & 0.510 \\
 & HC-RCQR & 0.240 (0.001) & 0.551 (0.002) & 0.503 \\
\bottomrule
\end{tabular}
\end{table}

\begin{figure}[t]
\centering
\includegraphics[width=0.97\textwidth]{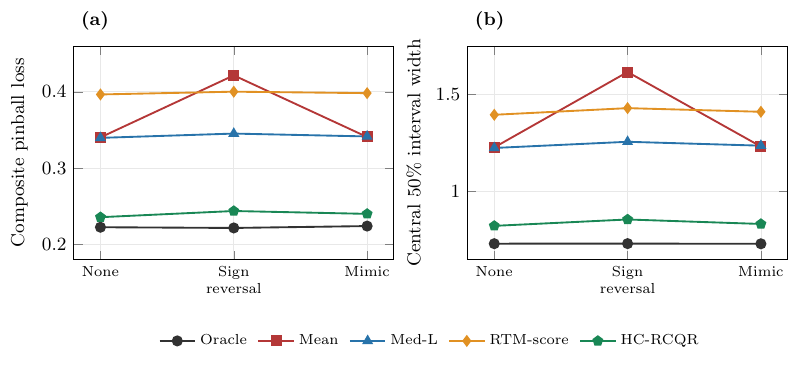}
\caption{Mean application performance across the three Bike Sharing settings. (a) Composite pinball loss. (b) Width of the central 50\% prediction interval. Values are means over the assignment seeds; Oracle is infeasible.}
\label{fig:bikeperf}
\end{figure}

The Byzantine messages are artificially injected; the data contain no identified real attacks. This experiment assesses sensitivity to cross-month heterogeneity under a controlled attack. Some local design columns are constant or redundant, and ridge
inversion makes the updates computable; the experiment does not verify the
sampling, curvature, or common-slope assumptions. Hourly dependence also falls outside the independent-sampling theory; the common slope is a working model for cross-month prediction. The aggregation rules do not use fault labels, but
the experimental penalty uses honest-site sample sizes, which are known from
the attack assignments. Deployment would require tuning that does not use
unknown fault identities and a justified upper bound on the Byzantine fraction.

\FloatBarrier
\section{Conclusion and Discussion}\label{sec:discussion}

This paper studies sparse composite quantile regression with heterogeneous honest
sites and Byzantine workers. We address these two issues jointly by retaining a
common sparse slope while allowing site-specific designs, error laws, and
quantile intercepts. Proposition~\ref{prop:identification} shows why an
average of unrestricted site-specific slopes is not identifiable when fault
identities are unknown. HC-RCQR profiles local intercepts, calibrates slope
scores with local profile Hessians, and uses coordinatewise trimming and soft
thresholding at the server. Proposition~\ref{prop:rawattack} shows how
curvature heterogeneity can allow an intermediate Byzantine report to survive
raw-score trimming. Under the event in Assumption~\ref{ass:calibration} and a
valid initializer, Lemma~\ref{lem:message}, Theorem~\ref{thm:recursion}, and
Corollary~\ref{cor:rate} provide conditional error, contraction, and support
guarantees. The numerical studies illustrate the method under heterogeneous
sites and adversarial messages.

The main theorem is conditional for the uncapped update and analytical threshold schedule. The scalar probability result applies only to its stated fixed-dimensional model, not to the simulation design or to growing dimension. Coordinatewise trimming is tied to the chosen coordinate system, and the bounds do not establish pooled root-$N_H$ inference under fixed positive contamination. The algorithmic formulation applies to smoothed ordinary quantile regression when $q=1$; unsmoothed loss and varying slopes require new analyses, and SCAD and MCP \citep{fan2001scad,zhang2010mcp} require new thresholding arguments. Future work should address growing dimension, unequal site sizes, restricted-information attacks, and tuning rules that do not use fault identities.

\section*{Statements and Declarations}

\subsection*{Funding}
The work of Jian Qing Shi was supported by the National Key R\&D Program of China (Grant No. 2023YFA1011400) and the National Natural Science Foundation of China (Grant No. 12271239).

\subsection*{Competing interests}
The authors declare no competing interests.

\subsection*{Author contributions}
Xiaofei Wu: methodology, software, formal analysis, simulation, and writing---original draft. Jian Qing Shi: conceptualization, methodology, supervision, and writing---review and editing. 

\subsection*{Data availability}
The Bike Sharing data are publicly available from the UCI Machine Learning Repository under DOI: \url{https://doi.org/10.24432/C5W894} and are not redistributed with this article. Replication code and replicate-level results are provided in the reproducibility archive accompanying the submission.

\subsection*{Code availability}
The Python reference implementation, experiment drivers, fixed seeds, generated tables, and replicate-level results are provided in the reproducibility archive accompanying the submission. The code uses NumPy and pandas and includes commands for reproducing every reported result.

\appendix
\numberwithin{theorem}{section}
\numberwithin{lemma}{section}
\numberwithin{proposition}{section}
\numberwithin{corollary}{section}
\numberwithin{assumption}{section}
\numberwithin{remark}{section}

\section{Notation and high-probability events}\label{sup:notation}

The notation follows the main text. The ingredients needed for the proofs are restated here for convenience. There are $m$ workers, a fixed but unknown honest set $\Hcal$, and a Byzantine set $\Bcal$, with
$|\Bcal|\leq\alpha m$. Honest worker $k$ has $n_k$ observations and satisfies
\[
 Q_{\tau_\ell}(Y_{ki}\mid\bx_{ki})
 =b^*_{k\ell}+\bx_{ki}^{\mathsf T}\bbeta^*,
 \qquad \ell=1,\ldots,q.
\]
The same sparse slope $\bbeta^*$ is imposed across honest sites and the selected quantile levels, whereas the intercepts $b^*_{k\ell}$ may vary by site and quantile.
As in Assumption~\ref{ass:sampling} of the main text, we require $\mathbb E|Y_{ki}|<\infty$.
Together with the design moment bound, this makes the population smoothed
loss finite at every finite parameter value. No finite error variance is
required. The target support is $\Scal=\supp(\bbeta^*)$ and $|\Scal|=s$. We write
$m_H=|\Hcal|$, $N_H=\sum_{k\in\Hcal}n_k$,
$n_{\min}=\min_{k\in\Hcal}n_k$, and $n_{\max}=\max_{k\in\Hcal}n_k$. We assume that site sizes are comparable:
$n_{\max}/n_{\min}\leq C_n$ for a fixed constant $C_n$. This condition lets us bound the unweighted honest-worker average at the pooled scale $N_H^{-1/2}$. Without comparable site sizes, the corresponding term is
$m_H^{-1}(\sum_{k\in\Hcal}n_k^{-1})^{1/2}$.
The sparse neighborhood is
\[
 \mathcal C(r_0)=\{\bbeta:\|\bbeta-\bbeta^*\|_\infty\leq r_0,
 |\supp(\bbeta)\cup\Scal|\leq c_ss\}.
\]
We use $x\lesssim y$ to mean $x\leq Cy$ for a constant $C$ independent of
the relevant sample-size and dimensional parameters, and
$x\asymp y$ when both $x\lesssim y$ and $y\lesssim x$ hold.

For bandwidth $h$, define the smoothed check loss and its derivative by
\[
 \rho_{\tau,h}(u)=\tau u+h\log\{1+\exp(-u/h)\},\qquad
 \psi_{\tau,h}(u)=\tau-\{1+\exp(u/h)\}^{-1}.
\]
Let $\widehat Q_{k,h}(\bbeta)$ be the empirical loss after profiling the
$q$ site-specific intercepts. Its score and Hessian are denoted
$\widehat{\bg}_k(\bbeta)$ and $\widehat{\bA}_k(\bbeta)$.
At an iterate $\bbeta$, the honest message is
\[
 \bz_k(\bbeta)=\bbeta-\widehat{\bTheta}_k(\bbeta)
 \widehat{\bg}_k(\bbeta).
\]

For a matrix $\boldsymbol M$, write
$\|\boldsymbol M\|_{\infty,r}=\sup\{\|\boldsymbol M\boldsymbol v\|_\infty:
\|\boldsymbol v\|_\infty\leq1,\ |\supp(\boldsymbol v)|\leq r\}$.
Fix $\delta\in(0,1)$, and let $\mathcal E_\delta$ denote the joint calibration and score event described next. The proof is easiest to follow after separating its two components. First, on $\mathcal E_\delta$, for every honest site and all $\bbeta,\bbeta'$ in the sparse neighborhood $\mathcal C(r_0)$,
\begin{align}
 \|\bI-\widehat{\bTheta}_k(\bbeta)
       \widehat{\bA}_k(\bbeta)\|_{\infty,c_s s}&\leq\mu_n,     \label{sup:eq:cal}\\
 \|\widehat{\bTheta}_k(\bbeta)\|_{\infty\to\infty}&\leq2M_A,  \label{sup:eq:thetabound}\\
 \|\widehat{\bA}_k(\bbeta)-\widehat{\bA}_k(\bbeta')\|_{\infty\to\infty}
 &\leq L_A\|\bbeta-\bbeta'\|_\infty.                          \label{sup:eq:lipschitz}
\end{align}
Second, on the same event and uniformly for $\bbeta\in\mathcal C(r_0)$, decompose the calibrated
score at the target as
\begin{equation}
 -\widehat{\bTheta}_k(\bbeta)\widehat{\bg}_k(\bbeta^*)
 =\boldsymbol d_{k,h}(\bbeta)+\boldsymbol\xi_k(\bbeta),        \label{sup:eq:scoredec}
\end{equation}
where $\|\boldsymbol d_{k,h}(\bbeta)\|_\infty\leq a_{n,h}$ for a deterministic offset bound $a_{n,h}\geq0$, and the residual worker terms satisfy, uniformly over the neighborhood,
\begin{align}
 \left\|m_H^{-1}\sum_{k\in\Hcal}\boldsymbol\xi_k(\bbeta)\right\|_\infty
 &\leq C_\xi\sqrt{\frac{\log(2p/\delta)}{N_H}},                \label{sup:eq:pooled}\\
 \max_{k\in\Hcal}\|\boldsymbol\xi_k(\bbeta)\|_\infty
 &\leq C_\xi\sqrt{\frac{\log(2pm/\delta)}{n_{\min}}}.          \label{sup:eq:max}
\end{align}
By definition, $\mathcal E_\delta$ requires these bounds over the whole
neighborhood rather than only at a fixed point. Consequently, once the event
occurs, the bounds apply to any data-dependent iterate in $\mathcal C(r_0)$.

We assume $\mathbb P(\mathcal E_\delta)\geq1-\delta$. For an asymptotic sequence with $mp\to\infty$, one may, for example, take $\delta=(mp)^{-c_\delta}$ with fixed $c_\delta>0$, provided that the same event is verified at this confidence level. For the second-order smoothing specialization in the main text, we additionally require $a_{n,h}\leq C_bh^2$. The deterministic proof below does not require the residual terms to be independent or mean zero once these inequalities hold. Independence at a fixed nonrandom argument does not, by itself, establish uniform control at adaptive iterates. Nor do bounded quantile scores alone establish the offset order after empirical intercept profiling and estimated inverse calibration.

Sample splitting may help separate inverse estimation from score evaluation, but it still requires bounds for nuisance estimation, smoothing, and uniformity. No such primitive-condition theorem is asserted here. In moderate dimension, ridge inversion gives the deterministic bound
\[
 \|\bI-(\widehat{\bA}_k+\kappa\bI)^{-1}\widehat{\bA}_k\|_{\infty,c_ss}
 \leq\kappa\|(\widehat{\bA}_k+\kappa\bI)^{-1}\|_{\infty\to\infty}.
\]
Thus, under \eqref{sup:eq:thetabound}, the ridge construction permits the choice $\mu_n=2\kappa M_A$; it does not, however, prove \eqref{sup:eq:scoredec}--\eqref{sup:eq:max}. The following deterministic result makes the corresponding Dantzig requirement explicit. The row-wise constraint is the Dantzig-type formulation of \citet{candes2007dantzig}, used here together with the constrained inverse-estimation idea of \citet{cai2011clime}.

\begin{proposition}[Dantzig calibration check]\label{sup:prop:dantzig}
Here $\boldsymbol e_j$ denotes the $j$th canonical basis vector in $\R^p$.
Let $\|\boldsymbol M\|_{\max}=\max_{j\ell}|M_{j\ell}|$. Suppose, uniformly over honest $k$ and $\bbeta\in\mathcal C(r_0)$, that $\widehat{\bA}_k(\bbeta)$ and $\bA_k(\bbeta)$ are symmetric,
\[
 \|\widehat{\bA}_k(\bbeta)-\bA_k(\bbeta)\|_{\max}\leq a_A,
 \qquad \|\bA_k(\bbeta)^{-1}\|_{\infty\to\infty}\leq M_A.
\]
If $\nu_k\geq M_Aa_A$, then the row-wise program in the main text yields
\[
 \|\widehat{\bTheta}_k(\bbeta)\|_{\infty\to\infty}\leq M_A,
 \qquad
 \|\bI-\widehat{\bTheta}_k(\bbeta)\widehat{\bA}_k(\bbeta)\|_{\infty,c_ss}
 \leq c_ss\max_k\nu_k.
\]
\end{proposition}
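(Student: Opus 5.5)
The plan is the standard CLIME feasibility argument. First show that the rows of the population inverse are feasible for the Dantzig program. Then use $\ell_1$-minimality to bound the rows of $\widehat{\bTheta}_k$. Finally convert the max-norm residual into the restricted action norm.

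\emph{Feasibility.} Fix an honest $k$ and $\bbeta\in\mathcal C(r_0)$, and write $\widehat{\bA}=\widehat{\bA}_k(\bbeta)$ and $\bA=\bA_k(\bbeta)$. Let $\boldsymbol\omega_j$ denote the $j$th column of $\bA^{-1}$. By symmetry of $\bA$, this is also the transpose of row $j$, so
\[
\|\boldsymbol\omega_j\|_1\leq\|\bA^{-1}\|_{\infty\to\infty}\leq M_A .
\]
Because $\bA\boldsymbol\omega_j=\boldsymbol e_j$, we have
\[
\widehat{\bA}\boldsymbol\omega_j-\boldsymbol e_j=(\widehat{\bA}-\bA)\boldsymbol\omega_j .
\]
Each coordinate of the right side is bounded by $\|\widehat{\bA}-\bA\|_{\max}\|\boldsymbol\omega_j\|_1\leq a_AM_A\leq\nu_k$. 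Hence $\boldsymbol\omega_j$ is feasible for \eqref{eq:dantzig}, the program has a solution, and minimality gives $\|\widehat{\btheta}_{kj}\|_1\leq\|\boldsymbol\omega_j\|_1\leq M_A$. Since row $j$ of $\widehat{\bTheta}_k$ is $\widehat{\btheta}_{kj}^{\mathsf T}$, taking the maximum over $j$ yields the first claim, $\|\widehat{\bTheta}_k\|_{\infty\to\infty}\leq M_A$.

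\emph{Residual bound.} Entry $(j,l)$ of $\bI-\widehat{\bTheta}_k\widehat{\bA}$ equals $\delta_{jl}-\widehat{\btheta}_{kj}^{\mathsf T}\widehat{\bA}\boldsymbol e_l$. By symmetry of $\widehat{\bA}$, this is $\delta_{jl}-\boldsymbol e_l^{\mathsf T}\widehat{\bA}\widehat{\btheta}_{kj}$, which is the negative of the $l$th coordinate of $\widehat{\bA}\widehat{\btheta}_{kj}-\boldsymbol e_j$. The constraint in \eqref{eq:dantzig} therefore gives $\|\bI-\widehat{\bTheta}_k\widehat{\bA}\|_{\max}\leq\nu_k$. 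This is exactly the step where symmetry is needed: the constraint controls $\widehat{\bA}\widehat{\btheta}_{kj}$ column by column, while the product $\widehat{\bTheta}_k\widehat{\bA}$ acts through rows.

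Now take $\boldsymbol v$ with $\|\boldsymbol v\|_\infty\leq1$ and $|\supp(\boldsymbol v)|\leq c_ss$. Each coordinate of $(\bI-\widehat{\bTheta}_k\widehat{\bA})\boldsymbol v$ is a sum of at most $c_ss$ terms, each bounded by $\nu_k$ in absolute value. It is therefore at most $c_ss\,\nu_k\leq c_ss\max_k\nu_k$, which is the second claim. All bounds are deterministic and hold pointwise in $\bbeta$, so they hold uniformly over honest $k$ and $\mathcal C(r_0)$.

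The only delicate point is the residual step. There one must use symmetry of $\widehat{\bA}_k$ to convert the column constraint into a bound on the rows of $\widehat{\bTheta}_k\widehat{\bA}_k$. One must also avoid symmetrizing $\widehat{\bTheta}_k$, as the main text notes, because symmetrization need not preserve the constraints.
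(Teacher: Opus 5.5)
Your proof is correct and follows the paper's argument essentially step for step. You show that the columns of $\bA_k(\bbeta)^{-1}$ are feasible, using symmetry together with the max-norm times $\ell_1$ bound, and then use $\ell_1$-minimality to get the row-sum bound. You then use symmetry of $\widehat{\bA}_k$ to identify row $j$ of $\bI-\widehat{\bTheta}_k\widehat{\bA}_k$ with the transposed constraint residual, and finish with the $c_ss$-sparse action bound.
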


\noindent\textit{Proof.}
For row $j$, put $\boldsymbol\theta_j^0=\bA_k(\bbeta)^{-1}\boldsymbol e_j$. Symmetry and the inverse row-sum bound give $\|\boldsymbol\theta_j^0\|_1\leq M_A$, while
$\|(\widehat{\bA}_k-\bA_k)\boldsymbol\theta_j^0\|_\infty\leq a_AM_A\leq\nu_k$.
Thus $\boldsymbol\theta_j^0$ is feasible and the optimizer has $\ell_1$ norm at most $M_A$. Moreover, symmetry of $\widehat{\bA}_k$ makes row $j$ of $\bI-\widehat{\bTheta}_k\widehat{\bA}_k$ the transpose of $\boldsymbol e_j-\widehat{\bA}_k\widehat{\boldsymbol\theta}_{kj}$. Every entry is therefore bounded by $\nu_k$, and its action on a $c_ss$-sparse vector is at most $c_ss\nu_k$. $\square$

This proposition verifies the algebraic calibration part once a uniform Hessian approximation is available. It does not establish the calibrated-score event. In particular, bounded inverse row sums are structural and do not follow from sparsity alone \citep{cai2011clime}; likewise, proving $a_{n,h}=O(h^2)$ for empirical profiling and estimated calibration remains a separate statistical step.

\subsection{A location--scale example and its local curvature}\label{sup:template}

The assumptions are easiest to interpret under the following data-generating
model. At an honest site, let
\[
 Y_{ki}=a_k+\bx_{ki}^{\mathsf T}\bbeta^*+\sigma_k\epsilon_{ki},
 \qquad
 \bx_{ki}\sim N(\boldsymbol 0,\boldsymbol\Sigma_k),
 \qquad \epsilon_{ki}\perp\bx_{ki}.
\]
Take $\epsilon_{ki}$ to have a standard normal or Logistic distribution,
$0<\underline\sigma\leq\sigma_k\leq\overline\sigma<\infty$, and choose
central quantiles such as $(0.25,0.50,0.75)$. The site-specific intercepts are
then
\[
 b^*_{k\ell}=a_k+\sigma_kF_\epsilon^{-1}(\tau_\ell).
\]
Thus $a_k$, $\sigma_k$, and $\boldsymbol\Sigma_k$ may vary across sites while
the slope remains common. For example, one may take
$(\Sigma_k)_{rs}=\rho_k^{|r-s|}$ with $|\rho_k|\leq\rho_0<1$; its eigenvalues
and inverse row sums are then uniformly bounded, and its inverse is
row-sparse.

With comparable site sizes and independent observations, this model satisfies
the sampling, first-moment, and quantile-density conditions. To examine
curvature, fix $h>0$ and let $b_{k\ell,h}^*$ be the population profiled
intercept at $\bbeta^*$ for the smoothed loss. It may differ from the
unsmoothed quantile intercept $b_{k\ell}^*$. Write
\[
 K_h(u)=\frac{\exp(u/h)}{h\{1+\exp(u/h)\}^2},\qquad
 c_{k,h}=\frac1q\sum_{\ell=1}^q
 \mathbb E K_h(a_k+\sigma_k\epsilon_k-b_{k\ell,h}^*)>0.
\]
At $\bbeta^*$, independence and $\mathbb E\bx_k=0$ make the population
intercept--slope Hessian block zero. Hence the exact identity is
\[
 \bA_k(\bbeta^*)=c_{k,h}\boldsymbol\Sigma_k.
\]
This identity concerns the true slope. Away from it, the residual contains
$-\bx_k^{\mathsf T}(\bbeta-\bbeta^*)$, so the Hessian weights depend on
$\bx_k$. The profile Hessian is then generally not a scalar multiple of
$\boldsymbol\Sigma_k$, and sparsity of $\boldsymbol\Sigma_k^{-1}$ does
not ensure sparsity of every nearby inverse profile Hessian.

For fixed dimension and bandwidth, the local curvature bounds can instead
be obtained by a perturbation argument. Put $\bA_{k,0}=\bA_k(\bbeta^*)$
and suppose $\lambda_{\min}(\bA_{k,0})\geq c_0>0$,
$\lambda_{\max}(\bA_{k,0})\leq C_0$, and
$\|\bA_{k,0}^{-1}\|_{\infty\to\infty}\leq M_0$, uniformly over sites.
The covariance and scale bounds above provide such constants for fixed $h$.
On a neighborhood where
\[
 \sup_k\|\bA_k(\bbeta)-\bA_{k,0}\|_{2\to2}\leq c_0/2,\qquad
 \sup_k\|\bA_k(\bbeta)-\bA_{k,0}\|_{\infty\to\infty}
 \leq(2M_0)^{-1},
\]
the eigenvalues lie in $[c_0/2,C_0+c_0/2]$. Factoring
$\bA_k=\bA_{k,0}\{\bI+\bA_{k,0}^{-1}(\bA_k-\bA_{k,0})\}$
and using the convergent matrix geometric series also gives
$\|\bA_k^{-1}\|_{\infty\to\infty}\leq2M_0$.
Here $\|\cdot\|_{2\to2}$ is the spectral operator norm. Smooth densities,
Gaussian design moments, and the positive intercept Hessian give continuity
and bounded derivatives on a sufficiently small closed neighborhood for
fixed dimension and $h$. These yield the perturbation and local Lipschitz
bounds needed in Assumption~3.

This argument does not give dimension-free neighborhood sizes. In
$\mathcal C(r_0)$, $\|\bbeta-\bbeta^*\|_2\leq\sqrt{c_ss}\,r_0$;
thus the radius and curvature constants must be checked as $s,p$ grow or
$h$ decreases. The empirical calibration event in Assumption~4 needs
further uniform probability bounds. In particular, a Dantzig tolerance must
both cover the Hessian approximation error and keep $c_ss\max_k\nu_k$
small enough for contraction. The additional bound $a_{n,h}\lesssim h^2$
also requires control of the errors from empirical profiling and estimated
calibration; sample splitting alone does not prove it.

\section{Profile loss calculations}\label{sup:profile}

We first verify the local profile calculations. For a fixed site, suppress $k$ and write
\[
 L_h(\bbeta,\boldsymbol b)=\frac{1}{nq}
 \sum_{i=1}^n\sum_{\ell=1}^q
 \rho_{\tau_\ell,h}(Y_i-b_\ell-\bx_i^{\mathsf T}\bbeta).
\]
Let $\widehat{\boldsymbol b}(\bbeta)$ solve
$\nabla_{\boldsymbol b}L_h=0$. Because every $\tau_\ell\in(0,1)$, each intercept objective is coercive and strictly convex, so its minimizer is finite and unique. For logistic smoothing, the empirical intercept block is diagonal, with entries
\[
 \widehat D_{\ell\ell}=\frac{1}{nq}\sum_{i=1}^n
 \frac{\exp\{(Y_i-\widehat b_\ell-\bx_i^{\mathsf T}\bbeta)/h\}}
 {h[1+\exp\{(Y_i-\widehat b_\ell-\bx_i^{\mathsf T}\bbeta)/h\}]^2}>0.
\]
Thus it is nonsingular at every finite minimizer. Consequently, the implicit-function theorem gives
\[
 \frac{\partial\widehat{\boldsymbol b}(\bbeta)}
 {\partial\bbeta^{\mathsf T}}
 =-\{\nabla_{\boldsymbol b\boldsymbol b}^2L_h\}^{-1}
   \nabla_{\boldsymbol b\bbeta}^2L_h.
\]
Applying the chain rule to
$\widehat Q_h(\bbeta)=L_h\{\bbeta,\widehat{\boldsymbol b}(\bbeta)\}$
therefore yields the following familiar Schur complement.

\begin{lemma}[Profile derivatives]\label{sup:lem:profile}
Evaluate the full Hessian of $L_h$ at the fitted intercepts and candidate slope,
and order its coordinates as $(\boldsymbol b,\bbeta)$:
\[
 \begin{pmatrix}\widehat{\boldsymbol D}&\widehat{\boldsymbol C}\\
 \widehat{\boldsymbol C}^{\mathsf T}&\widehat{\boldsymbol B}\end{pmatrix},
\]
where the first block corresponds to $\boldsymbol b$. Then
\[
 \nabla\widehat Q_h(\bbeta)=
 \nabla_{\bbeta}L_h\{\bbeta,\widehat{\boldsymbol b}(\bbeta)\},
 \qquad
 \nabla^2\widehat Q_h(\bbeta)=
 \widehat{\boldsymbol B}-\widehat{\boldsymbol C}^{\mathsf T}
 \widehat{\boldsymbol D}^{-1}\widehat{\boldsymbol C}.
\]
\end{lemma}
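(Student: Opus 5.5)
The plan is a two-step chain-rule argument: the envelope theorem gives the gradient, and differentiating the intercept first-order condition gives the Hessian as a Schur complement.

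\emph{Gradient.} Write $\widehat Q_h(\bbeta)=L_h\{\bbeta,\widehat{\boldsymbol b}(\bbeta)\}$. By the discussion preceding the lemma, $\widehat{\boldsymbol b}(\bbeta)$ is the unique finite minimizer, and $\widehat{\boldsymbol D}$ is diagonal with strictly positive entries. The logistic-smoothed loss is $C^\infty$, so the implicit-function theorem makes $\widehat{\boldsymbol b}(\cdot)$ continuously differentiable in $\bbeta$, with Jacobian $-\widehat{\boldsymbol D}^{-1}\widehat{\boldsymbol C}$. Here $\widehat{\boldsymbol C}=\nabla^2_{\boldsymbol b\bbeta}L_h$ is a $q\times p$ block. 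The chain rule then gives
\[
\nabla\widehat Q_h(\bbeta)=\nabla_{\bbeta}L_h+\Bigl(\frac{\partial\widehat{\boldsymbol b}}{\partial\bbeta^{\mathsf T}}\Bigr)^{\mathsf T}\nabla_{\boldsymbol b}L_h .
\]
The second term vanishes because $\nabla_{\boldsymbol b}L_h\{\bbeta,\widehat{\boldsymbol b}(\bbeta)\}=\boldsymbol 0$ identically in $\bbeta$. This is the envelope identity.

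\emph{Hessian.} Differentiate the identity $\nabla\widehat Q_h(\bbeta)=\nabla_{\bbeta}L_h\{\bbeta,\widehat{\boldsymbol b}(\bbeta)\}$ once more by the chain rule:
\[
\nabla^2\widehat Q_h=\widehat{\boldsymbol B}+\widehat{\boldsymbol C}^{\mathsf T}\frac{\partial\widehat{\boldsymbol b}}{\partial\bbeta^{\mathsf T}}=\widehat{\boldsymbol B}-\widehat{\boldsymbol C}^{\mathsf T}\widehat{\boldsymbol D}^{-1}\widehat{\boldsymbol C}.
\]
To confirm the Jacobian formula, differentiate $\nabla_{\boldsymbol b}L_h\{\bbeta,\widehat{\boldsymbol b}(\bbeta)\}\equiv\boldsymbol 0$ in $\bbeta$. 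This gives $\widehat{\boldsymbol C}+\widehat{\boldsymbol D}\,\partial\widehat{\boldsymbol b}/\partial\bbeta^{\mathsf T}=\boldsymbol 0$, and we solve using the invertibility of $\widehat{\boldsymbol D}$. The entries of $\widehat{\boldsymbol D}$ are averages of $K_h$ values with $K_h>0$ everywhere, so this invertibility holds.

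\emph{Main obstacle.} The one point that needs care is justifying differentiability of $\widehat{\boldsymbol b}$, i.e.\ existence and uniqueness of a finite minimizer; everything after that is bookkeeping. For existence, coercivity follows because $\rho_{\tau,h}(u)\ge\tau u$ and $\rho_{\tau,h}(u)\ge(\tau-1)u$ with $\tau\in(0,1)$, so each intercept objective grows linearly in both directions. Uniqueness and the smooth inverse come from strict convexity, since $\rho''_{\tau,h}=K_h>0$. Finally, the block ordering $(\boldsymbol b,\bbeta)$ must be kept consistent so that $\widehat{\boldsymbol C}$ enters transposed in the slope--slope formula.
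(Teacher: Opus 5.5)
Your proposal is correct and follows essentially the same route as the paper. The gradient comes from the vanishing intercept score (the envelope identity). Implicit differentiation of that first-order condition gives $\partial\widehat{\boldsymbol b}/\partial\bbeta^{\mathsf T}=-\widehat{\boldsymbol D}^{-1}\widehat{\boldsymbol C}$, and substituting this into the derivative of the profile score gives the Schur complement. Your coercivity and strict-convexity argument, which also identifies the local implicit-function solution with the global minimizer, matches the justification the paper gives in the paragraph just before the lemma.
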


\noindent\textit{Proof.}
The first identity follows because
$\nabla_{\boldsymbol b}L_h\{\bbeta,\widehat{\boldsymbol b}(\bbeta)\}=0$.
Differentiating that first-order condition gives the displayed derivative of
$\widehat{\boldsymbol b}(\bbeta)$. Substitution into the derivative of the
profile score proves the second identity. $\square$

At fixed population nuisance parameters, symmetry of the logistic smoothing kernel cancels the first-order convolution term. Specifically, if $F$ has a bounded density derivative and $K$ is the standard logistic density, Taylor expansion gives $\int F(u-hv)K(v)\,dv-F(u)=O(h^2)$, since $\int vK(v)\,dv=0$ and $\int v^2K(v)\,dv<\infty$. With suitable uniform moment and density bounds, this explains the population smoothing-bias scale. It does not show that $\mathbb E\widehat{\bg}_k(\bbeta^*)$ has the same order: empirical profiling uses estimated intercepts, and inverse calibration adds another data-dependent factor. We therefore retain the calibrated envelope $a_{n,h}$ in \eqref{sup:eq:scoredec}. Moreover, because derivatives of $h^{-1}K(\cdot/h)$ introduce inverse powers of $h$, any primitive verification must state bandwidth and sample-size conditions that keep $L_A$, $M_A$, and the Hessian approximation under control.

\section{Robust aggregation}\label{sup:aggregation}

\subsection{Proof and CQR example for Proposition 2}
Let $g_-=\min(g_1,g_2)$ and $g_+=\max(g_1,g_2)$. As the Byzantine value $v$ varies, $\operatorname{median}(g_1,g_2,v)$ ranges over exactly $[g_-,g_+]$. Its largest distance from $(g_1+g_2)/2$ is therefore $(g_+-g_-)/2$. Taylor's formula gives $g_k=a_ke+r_k$ with $|r_k|\leq L_ke^2/2$, which proves the lower bound. For $z_k=\beta-a_k^{-1}g_k$, we have $z_k-\beta^*=-r_k/a_k$. The median of two honest values and one arbitrary value always lies between the honest values, proving the calibrated bound. These are different reference quantities: the first measures score displacement and the second parameter error.

The assumptions hold in a scalar smoothed CQR model with $X$ uniform on $\{-1,1\}$ and $Y=\beta^*X+\sigma_k Z$, where $Z\sim N(0,1)$ is independent of $X$ and $\sigma_k>0$ may differ across sites. Fix $h>0$ and the quantile grid, and let $b_{k\ell,h}$ be the population fitted intercept at $\beta^*$. At this slope, residuals depend only on $Z$, so centering and independence give $Q_k'(\beta^*)=0$ and a zero intercept--slope Hessian block. With $K_h$ the logistic smoothing density,
\[
 a_k=\frac1q\sum_{\ell=1}^q
 \mathbb E K_h(\sigma_k Z-b_{k\ell,h})>0.
\]
Smoothness and positivity of the intercept Hessian give a smooth local profile by the implicit function theorem; on a sufficiently small compact interval its second derivative is Lipschitz. Unequal curvatures can be obtained by varying $\sigma_k$: for fixed $\sigma_1$, $a_1>0$, while convolution with the normal density yields $a_k\leq(\sigma_k\sqrt{2\pi})^{-1}\to0$ as $\sigma_k\to\infty$. Choose a finite $\sigma_2$ with $a_2<a_1$. This verifies the proposition for a common-slope CQR model at fixed bandwidth, without assuming that smoothing preserves stationarity in every model. It is a population example, not a verification of the full empirical event in Assumption 4.

\subsection{Contaminated trimmed mean}

We next isolate the only deterministic fact needed about the coordinatewise trimmed mean. Honest worker values may have different offsets and distributions; the argument uses only their average and their range.

\begin{lemma}[Contaminated trimmed mean]\label{sup:lem:trim}
Let $x_1,\ldots,x_m$ be real numbers, of which the values indexed by
$\Hcal$ are honest and the remaining $B\leq b$ values are arbitrary. Put
$b=\lfloor\gamma m\rfloor$, $q_m=m-2b>0$, and suppose the honest values admit
\[
 x_k=\theta+d_k+\xi_k,\qquad
 \max_{k\in\Hcal}|d_k|\leq D,\quad
 \left|m_H^{-1}\sum_{k\in\Hcal}\xi_k\right|\leq V,\quad
 \max_{k\in\Hcal}|\xi_k|\leq U.
\]
Then
\begin{equation}
 \left|\trim_\gamma(x_1,\ldots,x_m)-\theta\right|
 \leq \frac{m_H}{q_m}(D+V)+\frac{3b}{q_m}(D+U).             \label{sup:eq:trim}
\end{equation}
In particular, if $\gamma\leq\bar\gamma<1/2$, the right-hand side is at most
$C_{\bar\gamma}\{D+V+\gamma U/(1-2\gamma)\}$.
\end{lemma}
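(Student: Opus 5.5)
The plan is a direct counting argument that compares the retained index set with the honest set, and uses only the decomposition of the honest values. Let $R\subseteq\{1,\ldots,m\}$ be the $q_m=m-2b$ indices kept after deleting the $b$ largest and $b$ smallest values, with ties broken arbitrarily. Then $\trim_\gamma(x_1,\ldots,x_m)-\theta=q_m^{-1}\sum_{i\in R}(x_i-\theta)$. I would split this sum as
\[
 \sum_{i\in R}(x_i-\theta)=\sum_{k\in\Hcal}(x_k-\theta)-\sum_{k\in\Hcal\setminus R}(x_k-\theta)+\sum_{i\in R\setminus\Hcal}(x_i-\theta).
\]
The first piece equals $\sum_{k\in\Hcal}d_k+\sum_{k\in\Hcal}\xi_k$. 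By the hypotheses it is bounded in absolute value by $m_H(D+V)$, which gives the first term of \eqref{sup:eq:trim}.

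For the two correction sums, I would first show that every summand in them is bounded by $D+U$. For honest indices this is immediate, since $|x_k-\theta|\leq|d_k|+|\xi_k|\leq D+U$.

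The key step is the claim for a retained arbitrary value $x_i$ with $i\in R\setminus\Hcal$: it lies in $[\min_{k\in\Hcal}x_k,\max_{k\in\Hcal}x_k]$. Because $x_i$ survived the upper trimming, at least $b+1$ entries (counting $x_i$ itself and the $b$ deleted top values) are $\geq x_i$. At most $B\leq b$ of these are non-honest, so at least one honest $x_k\geq x_i$. The symmetric argument at the lower tail gives an honest value $\leq x_i$. Hence $|x_i-\theta|\leq\max_{k\in\Hcal}|x_k-\theta|\leq D+U$. Ties need a brief remark, but they only help, since equal values can be relabelled.

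Next comes the counting. Removed honest indices satisfy $|\Hcal\setminus R|\leq 2b$, because only $2b$ indices are removed in total. Retained non-honest indices satisfy $|R\setminus\Hcal|\leq B\leq b$. Together the two correction sums are therefore at most $3b(D+U)$ in absolute value. Dividing by $q_m$ yields \eqref{sup:eq:trim}.

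For the "in particular" clause, use $q_m=m-2\lfloor\gamma m\rfloor\geq m(1-2\gamma)\geq m(1-2\bar\gamma)$. This gives $m_H/q_m\leq(1-2\bar\gamma)^{-1}$ and $3b/q_m\leq 3\gamma/(1-2\gamma)\leq 3\bar\gamma/(1-2\bar\gamma)$. The $D$ contributions are thus bounded by a constant multiple of $D$, the $V$ term by a constant multiple of $V$, and the $U$ term by $3\gamma U/(1-2\gamma)$. Absorbing these constants into $C_{\bar\gamma}$ completes the proof.

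The only step that needs care is the range-containment claim for surviving Byzantine values, specifically the requirement $b\geq B$. It is exactly there that the condition $\alpha\leq\gamma$ is used, and I would state it explicitly.
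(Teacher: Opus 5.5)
Your proposal is correct and follows essentially the same route as the paper. Both use the same three-way split of the retained sum into the full honest sum, the removed honest terms (at most $2b$ of them) and the surviving Byzantine terms (at most $B\le b$ of them), with range containment placing each surviving Byzantine value within $D+U$ of $\theta$, and both finish with $q_m\ge m(1-2\gamma)$, $m_H\le m$ and $b\le\gamma m$. Your counting form of the containment step (at least $b+1$ entries are $\ge x_i$, and at most $B\le b$ of them are non-honest) is just the contrapositive of the paper's remark that a Byzantine value above every honest value would be trimmed.
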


\noindent\textit{Proof.}
Let $R_H$ be the honest observations deleted from the two tails and let $S_B$
be the Byzantine observations that survive trimming. Any surviving Byzantine
value lies between the smallest and largest honest values. Indeed, a Byzantine
value strictly larger than every honest value would be among at most $B\leq b$
such upper-tail values and would therefore be deleted; the lower tail is
identical. Hence every element of $S_B$ is within $D+U$ of $\theta$.

Writing $T$ for the retained set, we have
\[
 \sum_{k\in T}(x_k-\theta)
 =\sum_{k\in\Hcal}(x_k-\theta)
  -\sum_{k\in R_H}(x_k-\theta)
  +\sum_{k\in S_B}(x_k-\theta).
\]
The first term is bounded in absolute value by $m_H(D+V)$. Moreover,
$|R_H|\leq2b$, $|S_B|\leq B\leq b$, and each summand in the last two terms is
bounded by $D+U$. Division by $q_m$ proves \eqref{sup:eq:trim}. Finally,
$q_m\geq m(1-2\gamma)$, $m_H\leq m$, and $b\leq\gamma m$, which gives the
simplified form. $\square$

Applying Lemma~\ref{sup:lem:trim} to each coordinate gives the vector version
on the joint coordinatewise event. Byzantine reports need no magnitude bound:
every retained value lies within the honest range for that coordinate.

\section{Expansion of a calibrated message}\label{sup:expansion}

Fix an honest worker and an iterate $\bbeta$. Set
$\boldsymbol e=\bbeta-\bbeta^*$ and
\[
 \overline{\bA}_k(\bbeta)=
 \int_0^1\widehat{\bA}_k(\bbeta^*+u\boldsymbol e)\,{\rm d}u.
\]
The fundamental theorem of calculus gives the exact identity
\[
 \widehat{\bg}_k(\bbeta)=\widehat{\bg}_k(\bbeta^*)
 +\overline{\bA}_k(\bbeta)\boldsymbol e.
\]
Consequently,
\begin{equation}
\begin{aligned}
 \bz_k(\bbeta)-\bbeta^*
={}&-\widehat{\bTheta}_k(\bbeta)\widehat{\bg}_k(\bbeta^*) \\
 &+\{\bI-\widehat{\bTheta}_k(\bbeta)
       \widehat{\bA}_k(\bbeta)\}\boldsymbol e \\
 &+\widehat{\bTheta}_k(\bbeta)
   \{\widehat{\bA}_k(\bbeta)-\overline{\bA}_k(\bbeta)\}
   \boldsymbol e.
\end{aligned}\label{sup:eq:newtonexpand}
\end{equation}
The second line is the calibration error. By
\eqref{sup:eq:lipschitz},
\[
 \|\widehat{\bA}_k(\bbeta)-\overline{\bA}_k(\bbeta)\|_{\infty\to\infty}
 \leq \int_0^1 L_A(1-u)\|\boldsymbol e\|_\infty\,{\rm d}u
 =\frac{L_A}{2}\|\boldsymbol e\|_\infty.
\]
Since $\bbeta\in\mathcal C(r_0)$, the vector $\boldsymbol e$ has at most
$c_ss$ nonzero coordinates and the integration segment stays in
$\mathcal C(r_0)$. Thus the restricted calibration norm applies to
$\boldsymbol e$. Combining this fact and the last inequality with
\eqref{sup:eq:cal}--\eqref{sup:eq:thetabound} proves the following result.

\begin{lemma}[Local Newton cancellation]\label{sup:lem:newton}
On the calibration and Lipschitz events, any honest message generated at an
iterate $\bbeta\in\mathcal C(r_0)$, with
$r=\|\bbeta-\bbeta^*\|_\infty$, satisfies
\begin{equation}
 \bz_k(\bbeta)-\bbeta^*
 =-\widehat{\bTheta}_k(\bbeta)\widehat{\bg}_k(\bbeta^*)
  +\boldsymbol R_k(\bbeta),\qquad
 \|\boldsymbol R_k(\bbeta)\|_\infty
 \leq\mu_n r+M_AL_A r^2.                                    \label{sup:eq:remainder}
\end{equation}
\end{lemma}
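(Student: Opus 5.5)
The plan is to start from the exact expansion \eqref{sup:eq:newtonexpand}. On the calibration and Lipschitz events it expresses $\bz_k(\bbeta)-\bbeta^*$ as the calibrated target score $-\widehat{\bTheta}_k(\bbeta)\widehat{\bg}_k(\bbeta^*)$ plus two further terms. Accordingly, I define
\[
 \boldsymbol R_k(\bbeta)=\{\bI-\widehat{\bTheta}_k(\bbeta)\widehat{\bA}_k(\bbeta)\}\boldsymbol e
 +\widehat{\bTheta}_k(\bbeta)\{\widehat{\bA}_k(\bbeta)-\overline{\bA}_k(\bbeta)\}\boldsymbol e .
\]
The identity part of \eqref{sup:eq:remainder} then holds by construction. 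It remains to bound the two pieces of $\boldsymbol R_k$ in sup norm, and I would then add the bounds using the triangle inequality.

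\emph{Calibration piece.} Since $\bbeta\in\mathcal C(r_0)$, the error $\boldsymbol e=\bbeta-\bbeta^*$ is supported in $\supp(\bbeta)\cup\Scal$, which has at most $c_ss$ elements. Also $\|\boldsymbol e\|_\infty=r$. If $r=0$ the piece vanishes. Otherwise $\boldsymbol e/r$ is admissible in the definition of $\|\cdot\|_{\infty,c_ss}$, so homogeneity and \eqref{sup:eq:cal} give a bound of $\mu_n r$.

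\emph{Newton remainder piece.} I would use submultiplicativity of the $\ell_\infty$-induced operator norm:
\[
 \|\widehat{\bTheta}_k\{\widehat{\bA}_k-\overline{\bA}_k\}\boldsymbol e\|_\infty
 \le\|\widehat{\bTheta}_k\|_{\infty\to\infty}\,\|\widehat{\bA}_k-\overline{\bA}_k\|_{\infty\to\infty}\,r .
\]
The first factor is at most $2M_A$ by \eqref{sup:eq:thetabound}. For the second factor, I write $\widehat{\bA}_k(\bbeta)-\overline{\bA}_k(\bbeta)=\int_0^1\{\widehat{\bA}_k(\bbeta)-\widehat{\bA}_k(\bbeta^*+u\boldsymbol e)\}\,{\rm d}u$ and move the norm inside the integral. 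Each integrand is then controlled by \eqref{sup:eq:lipschitz}, since $\|\bbeta-(\bbeta^*+u\boldsymbol e)\|_\infty=(1-u)r$. This gives a bound of $L_Ar/2$. Applying the Lipschitz bound requires the segment points $\bbeta^*+u\boldsymbol e$ to lie in $\mathcal C(r_0)$. They do: their sup distance to $\bbeta^*$ is $ur\le r_0$, and their supports are contained in $\supp(\bbeta)\cup\Scal$. Multiplying the factors gives $2M_A\cdot(L_A r/2)\cdot r=M_AL_Ar^2$.

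The only delicate points are measure-theoretic and uniformity issues, which I expect to be minor. First, the integral representation $\widehat{\bg}_k(\bbeta)=\widehat{\bg}_k(\bbeta^*)+\overline{\bA}_k\boldsymbol e$ needs $\widehat Q_{k,h}$ to be twice continuously differentiable along the segment. This follows from the smoothness of the logistic loss together with the implicit-function argument in Lemma~\ref{sup:lem:profile}, which uses the positive diagonal intercept Hessian. Second, all bounds must hold simultaneously for data-dependent $\bbeta$. They do, because the event is defined uniformly over $\mathcal C(r_0)$, so no union bound is needed. Combining the two pieces yields $\|\boldsymbol R_k(\bbeta)\|_\infty\le\mu_nr+M_AL_Ar^2$.
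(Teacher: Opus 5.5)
Your proposal is correct and takes the same approach as the paper: the exact expansion \eqref{sup:eq:newtonexpand} via the averaged Hessian $\overline{\bA}_k(\bbeta)$, the $\mu_n r$ bound from the restricted calibration norm using $c_ss$-sparsity of $\boldsymbol e$, and the bound $2M_A\cdot(L_Ar/2)\cdot r=M_AL_Ar^2$ after checking that the segment stays in $\mathcal C(r_0)$. The details you add (rescaling by $\boldsymbol e/r$ and the smoothness needed for the fundamental theorem of calculus) spell out steps the paper leaves implicit.
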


For uniformity over data-dependent iterates, the decomposition
\eqref{sup:eq:scoredec} and bounds
\eqref{sup:eq:pooled}--\eqref{sup:eq:max} are understood with
$\widehat{\bTheta}_k(\bbeta)$ in place of
$\widehat{\bTheta}_k(\bbeta^*)$, uniformly over
$\bbeta\in\mathcal C(r_0)$. We assume these uniform inequalities as stated in Section~\ref{sup:notation}; neither cross-fitting nor updating the preconditioner alone proves them.

\begin{theorem}[Uniform robust message bound]\label{sup:thm:message}
Suppose \eqref{sup:eq:cal}--\eqref{sup:eq:max} hold on $\mathcal E_\delta$, and let
$\alpha\leq\gamma\leq\bar\gamma<1/2$. There is a constant $C$, depending only
on $C_\xi,M_A,L_A,C_n$, and $\bar\gamma$, such that
\begin{align}
 &\left\|\trim_\gamma\{\bz_1(\bbeta),\ldots,\bz_m(\bbeta)\}
 -\bbeta^*\right\|_\infty \nonumber\\
 &\quad\leq C\left[
 a_{n,h}+\sqrt{\frac{\log(2p/\delta)}{N_H}}
 +\frac{\gamma}{1-2\gamma}
  \sqrt{\frac{\log(2pm/\delta)}{n_{\min}}}
 +\mu_n r+r^2\right]                                        \label{sup:eq:messagebound}
\end{align}
simultaneously for all $\bbeta\in\mathcal C(r_0)$ with
$r=\|\bbeta-\bbeta^*\|_\infty$.
\end{theorem}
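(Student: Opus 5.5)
The plan is to combine Lemma~\ref{sup:lem:newton} with the decomposition \eqref{sup:eq:scoredec} and then apply Lemma~\ref{sup:lem:trim} coordinatewise, on the fixed event $\mathcal E_\delta$. Because every bound in \eqref{sup:eq:cal}--\eqref{sup:eq:max} holds uniformly over $\mathcal C(r_0)$, the argument is entirely deterministic once the event occurs. In particular, no union bound over iterates or Byzantine strategies is needed, and the conclusion holds simultaneously for all $\bbeta\in\mathcal C(r_0)$.

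\textbf{Step 1 (honest message decomposition).} Fix $\bbeta\in\mathcal C(r_0)$ and put $r=\|\bbeta-\bbeta^*\|_\infty$. For every honest $k$, Lemma~\ref{sup:lem:newton} and \eqref{sup:eq:scoredec} give
\[
 \bz_k(\bbeta)-\bbeta^*=\boldsymbol d_{k,h}(\bbeta)+\boldsymbol\xi_k(\bbeta)+\boldsymbol R_k(\bbeta),
 \qquad \|\boldsymbol R_k(\bbeta)\|_\infty\leq\mu_n r+M_AL_Ar^2 .
\]
I will absorb the remainder into the offset by setting $d_k=d_{k,h,j}+R_{k,j}$ for each coordinate $j$. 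This gives $\max_k|d_k|\leq D:=a_{n,h}+\mu_n r+M_AL_Ar^2$. The fluctuation parts $\xi_k=\xi_{k,j}$ then satisfy the bounds needed by Lemma~\ref{sup:lem:trim}: $V=C_\xi\sqrt{\log(2p/\delta)/N_H}$ from \eqref{sup:eq:pooled} and $U=C_\xi\sqrt{\log(2pm/\delta)/n_{\min}}$ from \eqref{sup:eq:max}. Both bounds hold coordinatewise because they are stated in $\ell_\infty$.

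\textbf{Step 2 (trimming).} For each coordinate $j$, I apply Lemma~\ref{sup:lem:trim} with $\theta=\beta_j^*$, $x_k=z_{k,j}(\bbeta)$ and $b=\lfloor\gamma m\rfloor$. The Byzantine count satisfies $B=|\Bcal|\leq\lfloor\alpha m\rfloor\leq b$. The Byzantine values are arbitrary and may depend on $\bbeta$ and on the honest messages, but the lemma places no restriction on them. The simplified form of the lemma gives
\[
 |\widetilde z_j-\beta_j^*|\leq C_{\bar\gamma}\{D+V+\gamma U/(1-2\gamma)\}.
\]
Taking the maximum over $j$ and substituting $D$, $V$, $U$ yields \eqref{sup:eq:messagebound}. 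The constant is $C=C_{\bar\gamma}\max\{1,C_\xi,M_AL_A\}$.

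\textbf{Main obstacle.} The one delicate point is the $\gamma U$ term. In the simplified form, $\gamma U/(1-2\gamma)$ multiplies $U$ only, while $D$ appears with a bounded coefficient. This needs $m_H/q_m$ and $3b/q_m$ to be bounded by a constant depending only on $\bar\gamma$. That holds because $q_m\geq m(1-2\bar\gamma)$ and $b/q_m\leq\gamma/(1-2\gamma)$, so $3bD/q_m\leq C_{\bar\gamma}D$. A second point to check is that the constants do not depend on $C_n$ here, since $V$ is already stated at the pooled scale. Comparable site sizes matter only when \eqref{sup:eq:pooled} is derived, so $C_n$ may enter the final $C$ only nominally.
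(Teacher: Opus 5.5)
Your proposal is correct and follows the paper's proof essentially step for step. You write each honest coordinate via the Newton-cancellation lemma and the score decomposition, fold $\mu_n r+M_AL_Ar^2$ into the offset $D$, and apply the contaminated trimmed-mean lemma coordinatewise with the same $D$, $V$, $U$. Your constant $C_{\bar\gamma}\max\{1,C_\xi,M_AL_A\}$ is equivalent to the paper's explicit choice. Your remark that $C_n$ enters only nominally is consistent with the fact that the paper's explicit constant does not involve $C_n$.
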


\noindent\textit{Proof.}
For a fixed coordinate $j$, Lemma~\ref{sup:lem:newton} writes each honest
message as
\[
 z_{kj}=\beta_j^*+d_{kj,h}+\xi_{kj}+R_{kj}.
\]
Thus Lemma~\ref{sup:lem:trim} applies with
\[
 D=a_{n,h}+\mu_nr+M_AL_Ar^2,\quad
 V=C_\xi\sqrt{\frac{\log(2p/\delta)}{N_H}},\quad
 U=C_\xi\sqrt{\frac{\log(2pm/\delta)}{n_{\min}}}.
\]
For example, a sufficient explicit choice is
\[
 C=\max\left\{
 \frac{1+3\bar\gamma}{1-2\bar\gamma}\max(1,M_A L_A),\,
 \frac{C_\xi}{1-2\bar\gamma},\,3C_\xi\right\}.
\]
The factors $m_H/(m-2b)$ and $3b/(m-2b)$ in the trimming lemma
give this choice directly. It proves \eqref{sup:eq:messagebound} and makes
the dependence of the contraction conditions on the uniform envelopes explicit.
The events were already
constructed uniformly over coordinates and the sparse neighborhood, so no
additional union bound is needed at this step. $\square$

\begin{remark}
Without comparable site sizes, the same proof remains valid after replacing
$N_H^{-1/2}$ in \eqref{sup:eq:messagebound} by
$m_H^{-1}(\sum_{k\in\Hcal}n_k^{-1})^{1/2}$, provided the assumed uniform
honest-average bound is replaced by that scale as well. Site-size comparability
alone does not establish this probability bound for estimated calibrated scores.
\end{remark}

\begin{lemma}[Worker concentration]\label{sup:lem:concentration}
Suppose the vectors $\boldsymbol\xi_k$, $k\in\Hcal$, are independent and
mean zero, and each coordinate has sub-Gaussian norm at most
$K/\sqrt{n_k}$. If $n_{\max}/n_{\min}\leq C_n$, then, with probability at
least $1-\delta$,
\begin{align*}
 \left\|m_H^{-1}\sum_{k\in\Hcal}\boldsymbol\xi_k\right\|_\infty
 &\leq C K\sqrt{\frac{\log(4p/\delta)}{N_H}},\\
 \max_{k\in\Hcal}\|\boldsymbol\xi_k\|_\infty
 &\leq C K\sqrt{\frac{\log(4pm/\delta)}{n_{\min}}},
\end{align*}
where $C$ depends only on $C_n$.
\end{lemma}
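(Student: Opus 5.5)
The plan is to prove Lemma~\ref{sup:lem:concentration} coordinatewise, using Hoeffding-type tail bounds for sub-Gaussian variables and a union bound. The failure probability $\delta$ is split equally between the two displayed events, $\delta/2$ each. This splitting is why $\log(4p/\delta)$ and $\log(4pm/\delta)$ appear in place of $\log(2p/\delta)$ and $\log(2pm/\delta)$.

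For the averaged term, fix a coordinate $j$. The summands $m_H^{-1}\xi_{kj}$, $k\in\Hcal$, are independent, mean zero, and have sub-Gaussian norms at most $K/(m_H\sqrt{n_k})$. The standard sum bound (Hoeffding's inequality for sub-Gaussian variables) gives
\[
 \Bigl\|m_H^{-1}\sum_{k\in\Hcal}\xi_{kj}\Bigr\|_{\psi_2}^2\leq C_0\frac{K^2}{m_H^2}\sum_{k\in\Hcal}n_k^{-1}\leq C_0\frac{K^2}{m_H n_{\min}}.
\]
Comparable site sizes give $N_H\leq m_H n_{\max}\leq C_n m_H n_{\min}$, so $1/(m_H n_{\min})\leq C_n/N_H$. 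The sub-Gaussian tail then yields, for each $j$,
\[
 \mathbb P\bigl(|\cdot|>t\bigr)\leq 2\exp\{-c\,t^2N_H/(C_nK^2)\}.
\]
Setting this to $\delta/(2p)$ and taking a union bound over the $p$ coordinates gives the first inequality with $t=CK\sqrt{\log(4p/\delta)/N_H}$.

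For the maximum term, each $\xi_{kj}$ satisfies $\mathbb P(|\xi_{kj}|>t)\leq 2\exp(-ct^2n_k/K^2)\leq 2\exp(-ct^2n_{\min}/K^2)$. A union bound over the $m_H p\leq mp$ pairs $(k,j)$ with per-pair level $\delta/(2mp)$ gives $t=CK\sqrt{\log(4pm/\delta)/n_{\min}}$. A final union bound over the two events then gives total probability at least $1-\delta$. Independence across sites is not needed for this part; it enters only through the sum bound in the averaged term.

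The main obstacle is only bookkeeping of constants. The sub-Gaussian Hoeffding constant $C_0$ is absolute, so $C$ depends only on $C_n$, as claimed. The step that needs care is converting $m_H^{-2}\sum_k n_k^{-1}$ to the pooled scale $N_H^{-1}$, which is exactly where $n_{\max}/n_{\min}\leq C_n$ enters. Without this condition, one would keep $m_H^{-1}(\sum_k n_k^{-1})^{1/2}$, consistent with the Remark above.
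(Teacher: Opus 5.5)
Your proof is correct and matches the paper's argument. For the average, you use a sub-Gaussian Hoeffding bound per coordinate with variance proxy $K^2m_H^{-2}\sum_k n_k^{-1}\leq C(C_n)K^2/N_H$ and a union bound over $p$ coordinates; for the maximum, a union bound over at most $pm$ worker--coordinate pairs. Your explicit $\delta/2$ split between the two events is the bookkeeping the paper leaves implicit behind the $\log(4p/\delta)$ and $\log(4pm/\delta)$ factors.
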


\noindent\textit{Proof.}
For each coordinate, the unweighted average is sub-Gaussian with variance
proxy
\[
 \frac{K^2}{m_H^2}\sum_{k\in\Hcal}\frac{1}{n_k}
 \leq \frac{C(C_n)K^2}{N_H}.
\]
A union bound over $p$ coordinates gives the first result. For the second,
the sub-Gaussian tail at worker $k$ is bounded by
$2\exp(-c n_k t^2/K^2)$. A union bound over at most $pm$ worker-coordinate
pairs and the choice
$t=CK\{\log(4pm/\delta)/n_{\min}\}^{1/2}$ complete the proof. $\square$

This fixed-vector lemma illustrates the two stochastic scales in the main theorem; it is not a uniform empirical-process proof for the implemented estimator. Under its additional hypotheses, the
calibrated worker score is of local order $n_k^{-1/2}$, whereas its honest
average is of pooled order $N_H^{-1/2}$. Trimming replaces a fraction of that
average by values bounded only through the honest worker range, which accounts
for the additional factor $\gamma$ multiplying the local scale.

\section{Iterative thresholding and support recovery}\label{sup:iteration}

For the thresholding argument, let
\[
 \epsilon=a_{n,h}+\sqrt{\frac{\log(2p/\delta)}{N_H}}+
 \frac{\gamma}{1-2\gamma}
 \sqrt{\frac{\log(2pm/\delta)}{n_{\min}}},
 \qquad B(r)=C(\epsilon+\mu_nr+r^2),
\]
where $C$ is the constant in Theorem~\ref{sup:thm:message}. At round $t$,
write $\widetilde{\bz}^{(t)}$ for the coordinatewise trimmed mean and update
$\bbeta^{(t+1)}=\mathcal S_{\lambda_t}(\widetilde{\bz}^{(t)})$.

\begin{lemma}[Soft-thresholding]\label{sup:lem:soft}
If $\|\boldsymbol a-\bbeta^*\|_\infty\leq\lambda$, then
\[
 \supp\{\mathcal S_\lambda(\boldsymbol a)\}\subseteq\Scal,\qquad
 \|\mathcal S_\lambda(\boldsymbol a)-\bbeta^*\|_\infty\leq2\lambda.
\]
Consequently, its $\ell_2$ and $\ell_1$ errors are bounded by
$2\sqrt{s}\lambda$ and $2s\lambda$, respectively.
\end{lemma}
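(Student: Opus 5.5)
The argument is coordinatewise, because soft thresholding acts separately on each coordinate. Fix $j\in\{1,\ldots,p\}$ and write $a_j$ for the $j$th entry of $\boldsymbol a$. By hypothesis $|a_j-\beta_j^*|\leq\lambda$. The scalar facts we need are that $\mathcal S_\lambda(u)=0$ whenever $|u|\leq\lambda$, and that
\[
 |\mathcal S_\lambda(u)-u|=\min(|u|,\lambda)\leq\lambda
\]
for every real $u$. Both follow directly from the definition $\{\mathcal S_\lambda(v)\}_j=\operatorname{sign}(v_j)(|v_j|-\lambda)_+$ in \eqref{eq:update}.

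\emph{Support.} Take $j\notin\Scal$. Then $\beta_j^*=0$, so $|a_j|\leq\lambda$, and the first scalar fact gives $\{\mathcal S_\lambda(\boldsymbol a)\}_j=0$. Hence $\supp\{\mathcal S_\lambda(\boldsymbol a)\}\subseteq\Scal$.

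\emph{Sup-norm error.} On coordinates outside $\Scal$ both vectors vanish, so the error there is zero. For $j\in\Scal$, the triangle inequality gives
\[
 |\mathcal S_\lambda(a_j)-\beta_j^*|\leq|\mathcal S_\lambda(a_j)-a_j|+|a_j-\beta_j^*|\leq\lambda+\lambda=2\lambda.
\]
Taking the maximum over $j$ yields $\|\mathcal S_\lambda(\boldsymbol a)-\bbeta^*\|_\infty\leq2\lambda$.

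\emph{$\ell_2$ and $\ell_1$ errors.} Both $\mathcal S_\lambda(\boldsymbol a)$ and $\bbeta^*$ are supported in $\Scal$, so their difference has at most $s$ nonzero entries. For such a vector, $\|\cdot\|_2\leq\sqrt{s}\,\|\cdot\|_\infty$ and $\|\cdot\|_1\leq s\,\|\cdot\|_\infty$. Combined with the sup-norm bound, these give the stated $2\sqrt{s}\lambda$ and $2s\lambda$.

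No step is a real obstacle here. The one point to state carefully is the support inclusion: it is what restricts the error vector to $\Scal$, and the $\ell_2$ and $\ell_1$ bounds depend on that restriction.
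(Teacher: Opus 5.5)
Your proof is correct and follows essentially the same route as the paper. Support inclusion comes from $|a_j|\leq\lambda$ off $\Scal$, the bound $|\mathcal S_\lambda(a_j)-\beta_j^*|\leq|a_j-\beta_j^*|+\lambda\leq2\lambda$ holds on $\Scal$, and the support restriction then converts the sup-norm bound into the $\ell_2$ and $\ell_1$ bounds; the only difference is that you derive the middle inequality from $|\mathcal S_\lambda(u)-u|\leq\lambda$ and the triangle inequality, whereas the paper cites it as a contraction property.
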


\noindent\textit{Proof.}
For $j\notin\Scal$, $|a_j|\leq\lambda$, so the output coordinate is zero.
For $j\in\Scal$, the contraction property of scalar soft-thresholding gives
\[
 |\mathcal S_\lambda(a_j)-\beta_j^*|
 \leq |a_j-\beta_j^*|+\lambda\leq2\lambda.
\]
The support inclusion then converts the sup-norm bound into the stated
$\ell_2$ and $\ell_1$ bounds. $\square$

\begin{lemma}[Median-thresholded initialization]\label{sup:lem:init}
Let $\boldsymbol u_1,\ldots,\boldsymbol u_m$ be the reported local initial estimates, with fewer than $m/2$ Byzantine reports. If
\[
 \max_{k\in\Hcal}\|\boldsymbol u_k-\bbeta^*\|_\infty\leq a_0,
\]
then their coordinatewise median $\widetilde{\bbeta}^{\rm loc}$ satisfies
$\|\widetilde{\bbeta}^{\rm loc}-\bbeta^*\|_\infty\leq a_0$. Consequently, for $\lambda_{\rm init}\geq a_0$,
\[
 \supp\{\mathcal S_{\lambda_{\rm init}}(\widetilde{\bbeta}^{\rm loc})\}
 \subseteq\Scal,\qquad
 \|\mathcal S_{\lambda_{\rm init}}(\widetilde{\bbeta}^{\rm loc})-\bbeta^*\|_\infty
 \leq2\lambda_{\rm init}.
\]
In particular, this initializer belongs to $\mathcal C(r_0)$ if $2\lambda_{\rm init}\leq r_0$.
\end{lemma}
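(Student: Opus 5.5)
The argument works one coordinate at a time. It has two parts: a deterministic order-statistic claim about the coordinatewise median, and then an application of Lemma~\ref{sup:lem:soft}.

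\textbf{Step 1: the median stays within the honest range.} Fix a coordinate $j$ and let $\ell_j=\min_{k\in\Hcal}u_{kj}$ and $h_j=\max_{k\in\Hcal}u_{kj}$. By hypothesis, both $\ell_j$ and $h_j$ lie in $[\beta_j^*-a_0,\beta_j^*+a_0]$. I will show that the median $\widetilde\beta^{\rm loc}_j$ lies in $[\ell_j,h_j]$. The Byzantine reports number $B<m/2$, so the honest ones number $m_H>m/2$.

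Suppose the median were strictly above $h_j$. Then every value at most $h_j$ would be strictly below the median, so at least $m_H>m/2$ reported values would lie strictly below it. But at most $\lfloor (m-1)/2\rfloor$ values can lie strictly below the median, whichever convention is used (lower, upper, or midpoint of the two middle order statistics). This is a contradiction. The lower side is symmetric.

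For even $m$ with the midpoint convention, both middle order statistics lie in $[\ell_j,h_j]$ by the same counting, so their average does too. Hence $|\widetilde\beta^{\rm loc}_j-\beta_j^*|\le a_0$ for every $j$, which is the first claim.

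\textbf{Step 2: apply the soft-thresholding lemma.} Since $\lambda_{\rm init}\ge a_0$, we have $\|\widetilde{\bbeta}^{\rm loc}-\bbeta^*\|_\infty\le\lambda_{\rm init}$. Lemma~\ref{sup:lem:soft} with $\boldsymbol a=\widetilde{\bbeta}^{\rm loc}$ and $\lambda=\lambda_{\rm init}$ then gives both support inclusion in $\Scal$ and the sup-norm bound $2\lambda_{\rm init}$.

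\textbf{Step 3: membership in $\mathcal C(r_0)$.} The support of the initializer is contained in $\Scal$, so $|\supp\cup\Scal|=s\le c_s s$ because $c_s\ge1$. The sup-norm error is at most $2\lambda_{\rm init}\le r_0$. Together these are exactly the two defining conditions of $\mathcal C(r_0)$.

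The only step needing care is the median-convention bookkeeping in Step 1, in particular the even-$m$ midpoint case. Everything else is a direct invocation of Lemma~\ref{sup:lem:soft} and the definition of the neighbourhood.
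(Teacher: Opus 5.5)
Your proof is correct and follows the paper's argument: the honest reports are more than half of all reports and lie in $[\beta_j^*-a_0,\beta_j^*+a_0]$, so every coordinatewise median lies there too, and Lemma~\ref{sup:lem:soft} together with $c_s\geq1$ gives the rest. One small slip is that for even $m$ the number of values strictly on one side of the median can equal $m/2$ (for example, strictly below the upper or midpoint median), so your count should be $\lfloor m/2\rfloor$ rather than $\lfloor (m-1)/2\rfloor$; since $m_H>m/2$, the contradiction, and hence your even-$m$ midpoint case, still goes through.
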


\noindent\textit{Proof.}
For every coordinate, more than half of the reports lie in
$[\beta_j^*-a_0,\beta_j^*+a_0]$, so every coordinatewise median lies in that interval. Lemma~\ref{sup:lem:soft} gives the remaining claims; support inclusion also gives $|\supp(\bbeta^{(0)})\cup\Scal|=s\leq c_ss$. $\square$

\begin{theorem}[Error recursion]\label{sup:thm:recursion}
Fix a finite horizon $T$ and deterministic radii $\bar r_t\geq0$. On $\mathcal E_\delta$, if, for every $t=0,\ldots,T-1$,
$\|\bbeta^{(t)}-\bbeta^*\|_\infty\leq\bar r_t$ and
$\bbeta^{(t)}\in\mathcal C(r_0)$, then choosing
$\lambda_t\geq B(\bar r_t)$ yields, simultaneously over these rounds,
\[
 \supp(\bbeta^{(t+1)})\subseteq\Scal,\quad
 \|\bbeta^{(t+1)}-\bbeta^*\|_\infty\leq2\lambda_t,
\]
together with $\ell_2$ and $\ell_1$ bounds
$2\sqrt{s}\lambda_t$ and $2s\lambda_t$.
\end{theorem}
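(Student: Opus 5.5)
The argument is a round-by-round reduction to two earlier results: Theorem~\ref{sup:thm:message}, which bounds the trimmed message, and Lemma~\ref{sup:lem:soft}, which handles soft thresholding. Everything is done on the single event $\mathcal E_\delta$.

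\emph{Step 1: uniformity over iterates.} The bounds \eqref{sup:eq:cal}--\eqref{sup:eq:max} defining $\mathcal E_\delta$ hold uniformly over $\mathcal C(r_0)$. Theorem~\ref{sup:thm:message} therefore holds simultaneously for every $\bbeta\in\mathcal C(r_0)$ and every Byzantine strategy. The iterates $\bbeta^{(t)}$ may depend on the data and on past Byzantine messages, but this creates no difficulty: the bound holds pointwise on the event. In particular, no union bound over rounds is needed.

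\emph{Step 2: fix a round $t\le T-1$.} By hypothesis, $\bbeta^{(t)}\in\mathcal C(r_0)$ and $r_t:=\|\bbeta^{(t)}-\bbeta^*\|_\infty\le\bar r_t$. Theorem~\ref{sup:thm:message} gives
\[
\|\widetilde{\bz}^{(t)}-\bbeta^*\|_\infty\le C(\epsilon+\mu_n r_t+r_t^2)=B(r_t).
\]
The function $B$ is nondecreasing on $[0,\infty)$ because $\mu_n\ge0$ and $C>0$. Hence
\[
B(r_t)\le B(\bar r_t)\le\lambda_t .
\]

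\emph{Step 3: apply Lemma~\ref{sup:lem:soft}.} Take $\boldsymbol a=\widetilde{\bz}^{(t)}$ and $\lambda=\lambda_t$. The lemma yields
\[
\supp(\bbeta^{(t+1)})\subseteq\Scal,\qquad \|\bbeta^{(t+1)}-\bbeta^*\|_\infty\le2\lambda_t.
\]
The support inclusion means the error vector has at most $s$ nonzero entries. Consequently $\|\cdot\|_2\le\sqrt s\,\|\cdot\|_\infty$ and $\|\cdot\|_1\le s\,\|\cdot\|_\infty$, which give the $\ell_2$ bound $2\sqrt s\,\lambda_t$ and the $\ell_1$ bound $2s\lambda_t$. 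Since all of this happens on the one event $\mathcal E_\delta$, the conclusions hold simultaneously over $t=0,\ldots,T-1$.

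\emph{Main obstacle.} The delicate point is the legitimacy of Step 1. One must make sure the uniform event really covers data-dependent and adversarially influenced iterates. The only membership needed for this is $\bbeta^{(t)}\in\mathcal C(r_0)$, and it is assumed in the statement rather than derived. One should also check that the trimming lemma tolerates history-dependent Byzantine values. It does, because Lemma~\ref{sup:lem:trim} is deterministic in the arbitrary entries.
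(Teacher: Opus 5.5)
Your proposal is correct and follows the paper's proof essentially step for step. Theorem~\ref{sup:thm:message} together with monotonicity of $B$ gives $\|\widetilde{\bz}^{(t)}-\bbeta^*\|_\infty\le B(\bar r_t)\le\lambda_t$; Lemma~\ref{sup:lem:soft} then gives the support inclusion, the $2\lambda_t$ sup-norm bound, and hence the $\ell_2$ and $\ell_1$ bounds; and because $\mathcal E_\delta$ is uniform over $\mathcal C(r_0)$, data-dependent iterates are covered without a union bound over rounds.
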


\noindent\textit{Proof.}
By Theorem~\ref{sup:thm:message},
$\|\widetilde{\bz}^{(t)}-\bbeta^*\|_\infty
\leq B(\|\bbeta^{(t)}-\bbeta^*\|_\infty)\leq B(\bar r_t)\leq\lambda_t$.
Lemma~\ref{sup:lem:soft} then gives every conclusion. Since the event is
uniform over the sparse neighborhood, it remains valid for iterates computed
from the same data in earlier rounds. $\square$

\begin{corollary}[Terminal rate]\label{sup:cor:terminal}
Suppose $\mathcal E_\delta$ holds and the initializer belongs to $\mathcal C(r_0)$ with $r_0>0$ and $c_s\geq1$. Define $\bar r_0=r_0$, $\lambda_t=B(\bar r_t)$, and $\bar r_{t+1}=2B(\bar r_t)$. If
$C\mu_n\leq1/8$, $Cr_0\leq1/8$, and $\epsilon$ is small enough that
$4C\epsilon\leq r_0$, then
\[
 \|\bbeta^{(t)}-\bbeta^*\|_\infty\leq\bar r_t
 \leq 4C\epsilon+2^{-t}r_0.
\]
Thus $T=\max\{1,\lceil\log_2\{r_0/(4C\epsilon)\}\rceil\}$ rounds suffice for
$\|\bbeta^{(T)}-\bbeta^*\|_\infty\leq8C\epsilon$, which is $O(\epsilon)$
when $C$ is uniformly bounded. If
$\min_{j\in\Scal}|\beta_j^*|>2B(\bar r_{T-1})$, exact support recovery holds.
\end{corollary}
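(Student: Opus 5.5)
The plan is an induction on $t$ driven by Theorem~\ref{sup:thm:recursion}. At each round I would verify its two premises: $\bbeta^{(t)}\in\mathcal C(r_0)$ and $\|\bbeta^{(t)}-\bbeta^*\|_\infty\leq\bar r_t$. The base case is the hypothesis that the initializer lies in $\mathcal C(r_0)$, which gives $\|\bbeta^{(0)}-\bbeta^*\|_\infty\leq r_0=\bar r_0$.

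For the inductive step, assume $\bbeta^{(t)}\in\mathcal C(r_0)$ with error at most $\bar r_t\leq r_0$. Theorem~\ref{sup:thm:recursion} with $\lambda_t=B(\bar r_t)$ then yields $\supp(\bbeta^{(t+1)})\subseteq\Scal$ and $\|\bbeta^{(t+1)}-\bbeta^*\|_\infty\leq 2B(\bar r_t)=\bar r_{t+1}$. The support inclusion gives $|\supp(\bbeta^{(t+1)})\cup\Scal|=s\leq c_s s$. To get $\bbeta^{(t+1)}\in\mathcal C(r_0)$ it therefore remains to show $\bar r_{t+1}\leq r_0$. Using $C\mu_n\leq1/8$ and $\bar r_t^2\leq r_0\bar r_t$ with $Cr_0\leq1/8$,
\[
 \bar r_{t+1}=2C\epsilon+2C\mu_n\bar r_t+2C\bar r_t^2\leq 2C\epsilon+\tfrac14\bar r_t+\tfrac14\bar r_t=2C\epsilon+\tfrac12\bar r_t .
\]
Combined with $4C\epsilon\leq r_0$, this gives $\bar r_{t+1}\leq r_0/2+r_0/2=r_0$, which closes the induction. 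All rounds use the single event $\mathcal E_\delta$, so no union bound over $t$ is needed.

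Unrolling the linear recursion $\bar r_{t+1}\leq 2C\epsilon+\bar r_t/2$ gives
\[
 \bar r_t\leq 2C\epsilon\sum_{j<t}2^{-j}+2^{-t}r_0\leq 4C\epsilon+2^{-t}r_0 .
\]
With $T=\max\{1,\lceil\log_2\{r_0/(4C\epsilon)\}\rceil\}$ we have $2^{-T}r_0\leq 4C\epsilon$; when the maximum equals $1$, the condition $r_0\leq 8C\epsilon$ suffices. Hence $\bar r_T\leq 8C\epsilon$.

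For support recovery I would use the last round. Theorem~\ref{sup:thm:recursion} gives no false positives and $\|\bbeta^{(T)}-\bbeta^*\|_\infty\leq 2B(\bar r_{T-1})$. If $\min_{j\in\Scal}|\beta^*_j|>2B(\bar r_{T-1})$, then every true coordinate of $\bbeta^{(T)}$ is nonzero, so $\supp(\bbeta^{(T)})=\Scal$.

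The only delicate step is keeping the iterates inside $\mathcal C(r_0)$: the quadratic term must be absorbed, which requires $\bar r_t\leq r_0$ at every round. The induction hypothesis supplies exactly this, so I expect the argument to be routine once the smallness constants are tracked.
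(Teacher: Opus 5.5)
Your proposal is correct and follows the paper's proof essentially step for step: the same induction that keeps $\bar r_t\le r_0$ through $\bar r_{t+1}\le 2C\epsilon+\bar r_t/2$ and $2C\epsilon+r_0/2\le r_0$, the same unrolling to $4C\epsilon+2^{-t}r_0$, and the same use of the no-false-positive inclusion to stay in $\mathcal C(r_0)$. The only cosmetic difference is in support recovery: you use the post-threshold bound $\|\bbeta^{(T)}-\bbeta^*\|_\infty\le 2B(\bar r_{T-1})$, whereas the paper shows $|\widetilde z_j^{(T-1)}|>\lambda_{T-1}$ before thresholding, and both need exactly the same beta-min condition.
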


\noindent\textit{Proof.}
Whenever $\bar r_t\leq r_0$,
\[
 \bar r_{t+1}=2C(\epsilon+\mu_n\bar r_t+\bar r_t^2)
 \leq2C\epsilon+\tfrac12 \bar r_t,
\]
because $2C\mu_n\leq1/4$ and $2C\bar r_t\leq1/4$. Iteration of this affine
recursion gives
$\bar r_t\leq4C\epsilon+2^{-t}r_0$. To justify applying the recursion at every step, note separately that $2C\epsilon+r_0/2\leq r_0$. Induction gives $\bar r_t\leq r_0$ and $\|\bbeta^{(t)}-\bbeta^*\|_\infty\leq\bar r_t$, while the support inclusion after each update ensures the sparse-neighborhood condition. Lemma~\ref{sup:lem:init} supplies this initializer on the stated honest-local estimation event.

For support recovery, false positives are already excluded by
Theorem~\ref{sup:thm:recursion}. Furthermore,
$|\widetilde z_j^{(T-1)}|\geq|\beta_j^*|-B(\bar r_{T-1})>
B(\bar r_{T-1})=\lambda_{T-1}$ for every $j\in\Scal$. Hence none of the true
coordinates is thresholded to zero. $\square$

When empirical calibration is exact ($\mu_n=0$), the deterministic recursion reduces to $\bar r_{t+1}=2C\bar r_t^2+2C\epsilon$. This removes the linear calibration remainder. The logarithmic bound above remains sufficient; no sharper uniform round count is needed for the stated result.

\section{Identification of an honest-site slope average}\label{sup:identification}

Fix $m\geq3$ and an integer $1\leq b<m/2$, and write $m_H=m-b$. Partition the workers into a set $S$ of size $m_H-1$, two individual workers $u,v$, and a set $R$ of size $b-1$. Reports at $S$ and $R$ are zero; reports at $u$ and $v$ are $-\boldsymbol a$ and $\boldsymbol a$, respectively. Configuration A declares $S\cup\{u\}$ honest, whereas configuration B declares $S\cup\{v\}$ honest. All other workers are Byzantine. Thus both configurations have exactly $b$ faulty workers, identical ordered reports, and honest means $-\boldsymbol a/m_H$ and $\boldsymbol a/m_H$.

The same construction applies even if each worker reports its entire data-generating law rather than only a slope. At worker $k$ with reported slope $\boldsymbol v_k$, let the reported law be $\bx\sim N_p(\boldsymbol0,\bI)$ and $Y=\bx^{\mathsf T}\boldsymbol v_k+E$, with $E\sim N(0,1)$ independent. Honest workers follow that law; a Byzantine worker can mimic it exactly. The observed laws, and hence the laws of any finite samples from them, coincide across configurations. The target separation is $2\|\boldsymbol a\|_2/m_H$. By the triangle inequality, a deterministic estimator has error at least half this separation in one configuration. For a randomized estimator, the same lower bound holds for the larger of its two expected errors, because its output distribution is identical in the two configurations.

No replication argument is needed. Under a fraction upper bound $\alpha<1/2$, choose any admissible integer $b\geq1$ with $b\leq\lfloor\alpha m\rfloor$. If $\lfloor\alpha m\rfloor=0$, this construction is unavailable and does not imply nonidentification. The proposition concerns an average of arbitrary honest-site slopes, not the common-slope CQR parameter studied elsewhere in the paper.

\section{Complete simulation specification}\label{sup:simulation}

This section records the data-generating mechanism used in the reported simulations.
For replication $r$ and setting index $a$, the seed was
$20260907+1000a+r$, with $r=0,\ldots,49$. There were $m=20$ sites, $n=90$
observations per site, and $p=30$ predictors. Let
\[
 \phi_k=\frac{2\pi(k-1)}{m},\qquad k=1,\ldots,m,\qquad
 d_{kj}=\exp\{0.45H\sin(\phi_k+(j-1)/4)\},
\]
and
\[
 \rho_k=0.05+0.50H\{0.5+0.5\cos(\phi_k)\}.
\]
For observation $i$, draw $Z_{ki}\sim N(0,1)$ and
$\boldsymbol\epsilon_{ki}\sim N_p(\boldsymbol0,\bI_p)$ independently, and
form the preliminary covariate
\[
 X^{\rm pre}_{kij}=d_{kj}\left[
 \sqrt{\max(1-\rho_k^2,0.05)}\,\epsilon_{kij}+\rho_kZ_{ki}\right]
 +0.35H\sin(\phi_k)a_j,
\]
where $a_j$ is the $j$th entry of the equally spaced vector from $-1$ to
$1$. Every column was then centered and scaled by its site-specific empirical
standard deviation, with a lower scale bound of $0.25$. Hence heterogeneity in
the correlation pattern remains even after marginal standardization.

The response was generated from
\[
 Y_{ki}=\ell_k+\bx_{ki}^{\mathsf T}\bbeta^*+\sigma_k E_{ki},
\]
where $E_{ki}=T_{ki}/\sqrt3$, $T_{ki}\sim t_3$, and
\[
 \ell_k=0.9H\cos(\phi_k-0.4),\qquad
 \sigma_k=0.65+0.75H\{0.5+0.5\sin(\phi_k+0.7)\}.
\]
Thus the $t_3$ error has unit variance before multiplication by $\sigma_k$,
while both its location and scale differ over sites. The slope was
\[
 \bbeta^*=(1.00,-0.82,0.66,-0.46,0.24,0,\ldots,0)^{\mathsf T}.
\]

For a Byzantine fraction $\alpha$, the last
$\lfloor\alpha m\rfloor$ worker positions were
treated as malicious, and all received messages were shuffled before
aggregation. If $\boldsymbol c$ and $\boldsymbol s_c$ denote the
coordinatewise median and empirical standard deviation of the honest
messages, respectively, with every standard deviation floored at $10^{-3}$ as in the code, the sign-reversal attack sent
$-4\boldsymbol c$, using the sign-flipping construction described in the
main text. The mimic attack used the center-preserving, heterogeneity-aware
construction described there and sent
\[
 \boldsymbol c+0.9\boldsymbol s_c\odot
 \operatorname{sign}(\overline{\bz}_{H}-\boldsymbol c).
\]
Zero signs were replaced independently by $-1$ or $1$. With more than one
malicious worker, independent $N(0,0.02^2s_{c,j}^2)$ perturbations avoided
exact duplicate vectors.
Both attack rules are constructed after observing the current
honest messages and therefore represent omniscient stress tests.

The quantile levels were $(0.25,0.50,0.75)$. HC-RCQR used bandwidth $0.24$,
local penalty $0.24\sqrt{\log(p)/n}$, ridge $0.08$, four communication rounds,
server threshold floor $0.035+0.035\alpha$, and trim fraction
$\min\{\max(\alpha+0.05,0.05),0.40\}$. The threshold at round $t$ was the
threshold floor multiplied by $1+0.5^{t+1}$, with $t=0,1,2,3$; initialization used $1.5$ times the floor. The raw-score comparator started at zero and used 28 steps of size $0.25$. The Oracle penalty was $0.012$. Med-L used hard thresholding, whereas HC-RCQR used soft thresholding. Reported support statistics for all methods used the evaluation set $\{j:|\widehat\beta_j|>0.08\}$, which need not equal the nonzero support in the theoretical result. Empirical site standardization also induces within-site covariate dependence; this design is a numerical illustration, not a verification of the independent-sampling assumptions.

Table~\ref{sup:tab:alpha} gives the complete Byzantine-fraction slice
under the mimic attack. The increase from $\alpha=0$ to $\alpha=0.2$ was
modest for HC-RCQR, whereas RTM-score deteriorated steadily. Because the
mimic values were not extreme, Mean and Med-L happened to be close in this
particular slice; neither matched the calibrated estimator.

\begin{table}[t]
\caption{Effect of the Byzantine fraction under $H=0.8$ and the
heterogeneity-mimic attack. Error is mean (Monte Carlo standard error) over
50 replications}\label{sup:tab:alpha}
\centering
\small
\begin{tabular}{clrr}
\toprule
$\alpha$ & Method & $\ell_2$ error & Exact support \\
\midrule
0 & Oracle & 0.078 (0.003) & 1.00 \\
  & Mean & 0.314 (0.004) & 0.96 \\
  & Med-L & 0.302 (0.004) & 0.92 \\
  & RTM-score & 0.462 (0.004) & 0.96 \\
  & HC-RCQR & 0.096 (0.003) & 1.00 \\
\addlinespace
0.1 & Oracle & 0.076 (0.002) & 1.00 \\
  & Mean & 0.311 (0.003) & 0.98 \\
  & Med-L & 0.302 (0.003) & 0.96 \\
  & RTM-score & 0.479 (0.004) & 0.86 \\
  & HC-RCQR & 0.101 (0.003) & 1.00 \\
\addlinespace
0.2 & Oracle & 0.078 (0.002) & 1.00 \\
  & Mean & 0.323 (0.004) & 0.96 \\
  & Med-L & 0.321 (0.005) & 0.86 \\
  & RTM-score & 0.527 (0.005) & 0.56 \\
  & HC-RCQR & 0.116 (0.004) & 1.00 \\
\bottomrule
\end{tabular}
\end{table}

\section{Bike Sharing analysis details}\label{sup:bike}

We used the 17,379 records in the hourly Bike Sharing file from the UCI Machine
Learning Repository \citep{fanaee2013bike}. The response was
$\log(1+\mathtt{cnt})$. We defined site
$12\,\mathtt{yr}+\mathtt{mnth}-1$, producing 24 monthly sites, and ordered
each site by the original \texttt{instant} index. The first
$\lfloor0.8n_k\rfloor$ records were used for training and the remainder for
testing.

The predictors were temperature, apparent temperature, humidity, wind speed,
holiday, working day, 23 hour indicators, six weekday indicators, and three
weather indicators, for a total of 38. The first four variables were
standardized with means and standard deviations computed only from the pooled
training observations. Neither \texttt{casual} nor \texttt{registered} was
used because those two counts add to the response count. Raw dates, year,
month, season, and the record identifier were also excluded.

HC-RCQR used bandwidth $0.18$, five refinement rounds, ridge $0.055$, local
penalty $0.20\sqrt{\log(p)/\operatorname{median}_{k\in\Hcal} n_k}$, server threshold floor
$0.018+0.030\alpha$, and trim fraction
$\min\{\max(\alpha+0.05,0.05),0.40\}$. RTM-score used 35 steps of size
$0.18$, while the Oracle penalty was $0.004$. For each predicted site and
method, quantile intercepts were the $(0.25,0.50,0.75)$ quantiles of that
site's training residuals. As in the simulation, the initialization threshold is $1.5$ times the floor and refinement threshold $t$ is $(1+0.5^{t+1})$ times the floor. The use of honest-site sample sizes for the penalty is a benchmark convenience based on known attack assignments; deployment requires a rule that does not use unknown membership. Pooled training standardization is trusted offline preprocessing, and only messages are attacked. Since training records from later months can occur after test records from earlier months, this is within-site held-out evaluation, not a global chronological forecasting experiment. Serial dependence and the common quantile-slope assumption have not been verified for these data.

The centered monthly training design matrices have numerical ranks between
35 and 37, below the 38 columns used for prediction. Thus empirical profile
Hessians are singular even though the local sample sizes exceed the dimension.
Ridge inversion makes these numerical updates well defined; this calculation
does not verify the population curvature or contraction conditions.

Finally, with four Byzantine sites, the ten assignment seeds were
$20261001+100a+r$, where $a$ indexes the attack setting and
$r=0,\ldots,9$. Table~\ref{sup:tab:width} reports the interval-width
comparison omitted from the main table. The near-nominal coverage reported
in the main text should therefore be interpreted together with width:
HC-RCQR's intervals were markedly shorter than those of the other distributed
methods.

\begin{table}[t]
\caption{Mean width (standard error) of the fitted interquartile prediction
interval on honest Bike Sharing test sites}\label{sup:tab:width}
\centering
\small
\begin{tabular}{lrrr}
\toprule
Method & No attack & Sign reversal & Heterogeneity-mimic \\
\midrule
Oracle & 0.732 (0.000) & 0.732 (0.002) & 0.731 (0.003) \\
Mean & 1.228 (0.000) & 1.616 (0.003) & 1.232 (0.003) \\
Med-L & 1.225 (0.000) & 1.257 (0.004) & 1.237 (0.004) \\
RTM-score & 1.396 (0.000) & 1.430 (0.005) & 1.411 (0.003) \\
HC-RCQR & 0.824 (0.000) & 0.857 (0.002) & 0.834 (0.003) \\
\bottomrule
\end{tabular}
\end{table}

The Oracle objective uses honest records only, but its numerical solver starts from the HC-RCQR estimate. It uses at most 180 iterations in the simulation and 220 in the application, with stopping tolerance $2\times10^{-6}$. Thus finite-iteration Oracle results may depend on the starting value; convergence from independent starts was not checked.

\section{Reproducibility notes}\label{sup:repro}

The supplied archive was executed with Python 3.12.14, NumPy 2.3.5, and
pandas 3.0.1. The core estimator depends only on NumPy; pandas is used to
read and encode the application data. The following commands, run from the
\texttt{code} directory after placing the UCI \texttt{hour.csv} file in the
documented data directory, reproduce the reported files:
\begin{verbatim}
python run_simulation.py --reps 50 --output ../results
python run_bikesharing.py --data ../data/hour.csv \
    --reps 10 --output ../results
python make_artifacts.py
\end{verbatim}
The ridge implementation caps each Newton step at Euclidean radius
$2.5\max\{1,\|\bbeta\|_2\}$. If $\eta_k\in(0,1]$ is the factor applied
to an honest step, its capped message is
$\bz_{k,\eta}=\bbeta-\eta_k\widehat{\bTheta}_k\widehat{\bg}_k(\bbeta)$.
Writing $\boldsymbol e=\bbeta-\bbeta^*$ and using the uncapped message
$\bz_k$ gives the exact relation
\[
 \bz_{k,\eta}-\bbeta^*
 =(1-\eta_k)\boldsymbol e+\eta_k(\bz_k-\bbeta^*).
\]
When the cap is active, an extra first-order error remains, and the
site-dependent factors also change the pooled score term. The present
proof therefore covers uncapped messages. The original experiments
did not record cap activations; the added paired experiment does. The preset numerical thresholds have likewise not
been proved to exceed the analytical error bounds $B(\bar r_t)$.
The original comparisons use different initializations and iteration budgets.
Section~\ref{sup:matched} adds a comparison with common initialization,
thresholds and iteration counts; it records cap activations and paired errors.

Every summary table is generated from a replicate-level comma-separated file.
No displayed entry was transcribed from console output. The UCI source data
are not included in the archive; the README gives the permanent DOI and the
expected file name.

\section[A finite-dimensional probability verification]{A finite-dimensional probability verification}\label{sup:verification}
This section verifies the calibration and pilot requirements in a bounded
scalar CQR model. It uses the same observations for all local fits. The result
does not cover growing dimension, shrinking bandwidth, or numerical solver
error. In particular, it is not a verification for the simulation design.

\begin{theorem}[A scalar CQR model]\label{sup:thm:primitive}
Fix $q$, $\eta\in(0,1/2)$, quantiles $\tau_\ell\in[\eta,1-\eta]$,
bandwidth $h_0>0$, and $0<\sigma_-\leq\sigma_+<\infty$.
Assume $|\mathcal B|\leq\alpha m$ and
$\alpha\leq\gamma\leq\bar\gamma<1/2$; Byzantine reports are unrestricted.
Each honest site has $n$ independent observations
\[
 X_{ki}\sim\operatorname{Unif}\{-1,1\},\qquad
 Y_{ki}=a_k+X_{ki}\beta^*+\sigma_k U_{ki},\qquad
 U_{ki}\sim\operatorname{Unif}[-1,1],
\]
where $X_{ki}$ and $U_{ki}$ are independent, sites are independent,
$\sigma_k\in[\sigma_-,\sigma_+]$, and $a_k\in\mathbb R$ are arbitrary.
Let $\beta^*\ne0$, use exact profile minimization and the scalar ridge inverse
$\widehat\Theta_k(\beta)=\{\widehat A_k(\beta)+\kappa\}^{-1}$,
with $0\leq\kappa\leq1$. For any fixed $R>0$, there are positive constants
$c,K$ depending only on $R,h_0,q,\eta,\sigma_-,\sigma_+$ such that the following
holds. Put $L=\log(8m/\delta)$ and suppose $n\geq K L$.
With probability at least $1-\delta$, uniformly over honest sites and
$|\beta-\beta^*|\leq R$, Assumptions~\ref{ass:sampling}--\ref{ass:calibration} of the main text hold with
$p=s=c_s=1$, bounded curvature and row-norm constants,
\[
 \mu_n\leq K\kappa,\qquad a_{n,h_0}=K L/n,
\]
and a score constant $C_\xi\leq K$. The optional condition
$a_{n,h}\leq C_bh^2$ is not used. An exact minimizer of
$\widehat Q_k(\beta)+\lambda_{\rm loc}|\beta|$, with
$\lambda_{\rm loc}=\sqrt{L/n}$, obeys simultaneously
\[
 \max_{k\in\mathcal H}|\widehat\beta_k^{\rm loc}-\beta^*|
 \leq K\sqrt{L/n}.
\]
Consequently, for fixed $\gamma\leq\bar\gamma<1/2$ covering the Byzantine
fraction, sufficiently small fixed $\kappa$, and sufficiently large $n/L$,
median initialization with $\lambda_{\rm init}=K\sqrt{L/n}$ and the
analytical threshold schedule of Corollary 1 give, with probability at least
$1-\delta$, an uncapped terminal estimate satisfying
\[
 |\widehat\beta-\beta^*|\leq K\left\{
 \frac{L}{n}+\sqrt{\frac{\log(2/\delta)}{m_H n}}
 +\frac{\gamma}{1-2\gamma}\sqrt{\frac{\log(2m/\delta)}{n}}\right\}.
\]
Constants can be enlarged between displays. The required neighborhood and
threshold choices are justified below; they are not the preset numerical tunings.
\end{theorem}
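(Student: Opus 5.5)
The plan is to verify each assumption in the scalar model by explicit computation, and then to apply Lemma~\ref{sup:lem:init}, Theorem~\ref{sup:thm:message} and Corollary~\ref{sup:cor:terminal} directly.

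\emph{Population structure.} Write $R_{ki}(\beta)=a_k+\sigma_kU_{ki}-X_{ki}(\beta-\beta^*)$. Since $X_{ki}\in\{-1,1\}$, the residual distribution at slope $\beta$ is an equal mixture of two uniforms. The uniform density is Lipschitz only away from its endpoints $\pm\sigma_k$, but after logistic smoothing at the fixed bandwidth $h_0$ every population quantity becomes real analytic in $(\beta,\boldsymbol b)$. Derivatives of $\psi_{\tau,h_0}$ of all orders are bounded by constants depending on $h_0$.

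\emph{Population curvature.} I will bound the intercept curvature below. For each $\ell$, $\mathbb E K_{h_0}(R-b)$ is bounded away from zero uniformly over $b$ in a compact set. The population profiled intercepts lie in such a set because $\tau_\ell\in[\eta,1-\eta]$ and the residual law is bounded by $\sigma_++R$.

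This yields a smooth scalar profile $Q_k$ by the implicit function theorem. Its curvature $A_k(\beta)$ is a Schur complement. It is positive because the Cauchy--Schwarz gap between $\mathbb E[KX^2]$ and $(\mathbb E[KX])^2/\mathbb E K$ is nondegenerate: $X$ takes both values with positive weight. The lower bound $c_A$ holds uniformly over $|\beta-\beta^*|\le R$ and $\sigma_k\in[\sigma_-,\sigma_+]$ by compactness and continuity, after translating away $a_k$. Profiling is shift-equivariant, so $a_k$ is irrelevant. This settles Assumptions~\ref{ass:sampling}--\ref{ass:curvature} with $M_A=1/c_A$.

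\emph{Empirical uniform control.} This is the main obstacle. I need, uniformly over $\beta$ in the interval, all sites, and the $q$ intercepts, Bernstein-type deviation bounds of order $\sqrt{L/n}$ for the empirical quantities $n^{-1}\sum\psi(\cdot)$, $n^{-1}\sum K(\cdot)$, $n^{-1}\sum K(\cdot)X$ and $n^{-1}\sum K'(\cdot)X^2$, and for the third derivatives. The plan has four steps.
\begin{enumerate}
\item Because $X$ is binary, each sum splits into two subsample sums indexed by $U$ only, with shifts $\pm(\beta-\beta^*)$. The index is therefore a two-dimensional shift parameter in a compact set.
\item I will take a grid of mesh $1/n$. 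Each summand is bounded and Lipschitz with constant $O(h_0^{-j})$, so the grid discretization costs $O(1/n)$. Hoeffding plus a union bound over $O(n^2)$ grid points, $q$ quantiles and $m$ sites costs a $\log(nm/\delta)$ factor. I would absorb $\log n$ into $L$ using $n\ge KL$ where possible; otherwise I enlarge the constants, accepting a possible adjustment of $L$.
\item On this event, the empirical intercept curvature is at least half its population value. Hence $\widehat b_k(\beta)$ exists, is within $K\sqrt{L/n}$ of the population intercepts, and $\widehat A_k\ge c_A/2$ with a Lipschitz constant $L_A\le K$. This gives $\widehat\Theta_k\le 2/c_A=2M_A$, and $\mu_n=\kappa\widehat\Theta_k\le K\kappa$ via the identity $1-\widehat\Theta_k\widehat A_k=\kappa\widehat\Theta_k$.
\item For the score decomposition, I expand $\widehat g_k(\beta^*)$ around the population intercepts. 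By the envelope theorem, $\widehat g_k(\beta^*)=\nabla_\beta L(\beta^*,\widehat b)$. Its linear part in $\widehat b-b^*_h$ vanishes in expectation, since the population cross block is zero at $\beta^*$ by symmetry of $X$ and independence. What remains is a centered average $\xi_k$ of bounded terms, together with quadratic terms of size $O(L/n)$. I set $d_{k,h_0}$ equal to the product of $-\widehat\Theta_k$ with these quadratic terms and with the deviation of the cross-block average times the intercept error, both of size $KL/n$, so $a_{n,h_0}=KL/n$. The term $\xi_k$ is then $-\widehat\Theta_k(\beta)$ times a mean-zero average of i.i.d.\ bounded variables. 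Its pooled average is handled by writing $\widehat\Theta_k=\Theta_k^{\rm pop}+O(\sqrt{L/n})$; the correction is folded into $d$, which is still $O(L/n)$. Hoeffding across the $m_Hn$ independent terms then gives $\sqrt{\log(2/\delta)/(m_Hn)}$, and the maximal bound is Lemma~\ref{sup:lem:concentration}.
\end{enumerate}
Allocating $\delta/2$ to each event gives the probability bound.

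\emph{Pilot and conclusion.} The penalized local objective is $\widehat Q_k(\beta)+\lambda_{\rm loc}|\beta|$. The objective is strongly convex on the interval and its score at $\beta^*$ is $O(\sqrt{L/n})$. A standard subgradient argument, with convexity forcing the minimizer into the interval, gives $|\widehat\beta_k^{\rm loc}-\beta^*|\le K\sqrt{L/n}$. Then Lemma~\ref{sup:lem:init} places $\beta^{(0)}$ in $\mathcal C(r_0)$ for $n/L$ large. I fix $r_0$ small so that $Cr_0\le1/8$, and $\kappa$ small so that $C\mu_n\le CK\kappa\le1/8$. Corollary~\ref{sup:cor:terminal} then gives the displayed terminal bound $8C\epsilon$, with $\epsilon$ equal to $KL/n$ plus the two stochastic terms.
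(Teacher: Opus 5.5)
Your outline follows the paper's proof closely in structure. That covers Schur-complement curvature on the compact interval, uniform derivative concentration, and fitted-intercept error from intercept curvature. It also covers the ridge identity, the same-sample expansion $\widehat g_k(\beta^*)=S_k+O(L/n)$ via the vanishing population cross block, and replacement of $\widehat\Theta_k$ by the deterministic $\Theta_k^0$. The strong-convexity pilot and the deterministic trimming and induction lemmas also match. However, the empirical-process step as you describe it does not deliver the stated rates.

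\emph{The $\log n$ loss.} A single grid of mesh $1/n$ with a union bound gives deviations of order $\sqrt{\log(nm/\delta)/n}$. The hypothesis $n\ge KL$ only bounds $n$ from below, so it cannot absorb $\log n$ into $L=\log(8m/\delta)$. With $m,\delta$ fixed and $n\to\infty$, $\log n/L\to\infty$, and enlarging constants does not help.

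Where only smallness is needed, this is harmless, so your pilot bound survives. That covers the lower bound on $\widehat A_k$, $L_A$, and the pilot's strong convexity. But your offset $d_k(\beta)$ contains $\{\Theta_k^0(\beta)-\widehat\Theta_k(\beta)\}S_k$. This term is $O(L/n)$ only if $\sup_{|\beta-\beta^*|\le R}|\widehat A_k(\beta)-A_k(\beta)|\le K\sqrt{L/n}$. Your grid instead gives $a_{n,h_0}$ of order $\sqrt{L\log(nm/\delta)}/n$, not $KL/n$, and the first term of the terminal display degrades accordingly.

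The paper removes the logarithm by chaining. It uses dyadic grids of mesh $2^{-j}$, applies Hoeffding to increments with envelope $K2^{-j}$, and allots failure budgets $\epsilon\,2^{-j-1}$. This gives $\sup_t|(P_n-P)f_t|\le K\sqrt{\log(2/\epsilon)/n}$ with no $\log n$. You have already reduced everything to shift families of $U$ on the two design subsamples, so there is an alternative repair. Integrate by parts against the subsample empirical distribution function of $U$ and apply the DKW inequality. This works because $\psi_{\tau,h_0}$, $K_{h_0}$ and its derivatives have bounded variation.

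\emph{Uniformity of the pooled score term.} Assumption~\ref{ass:calibration} requires $|m_H^{-1}\sum_k\xi_k(\beta)|\le C_\xi\sqrt{\log(2/\delta)/(m_Hn)}$ simultaneously over $\beta$. But $m_H^{-1}\sum_k\Theta_k^0(\beta)S_k$ has site-dependent weights that vary with $\beta$, so Hoeffding over the $m_Hn$ summands is only pointwise in $\beta$. Lemma~\ref{sup:lem:concentration} is likewise a fixed-vector result. It suffices for the maximal term only because $\sup_\beta|\Theta_k^0(\beta)|\le K$ reduces that term to $\max_k|S_k|$. A grid with the crude increment bound $K|\beta-\beta'|\max_k|S_k|$ would give $\log(m/\delta)$ instead of $\log(2/\delta)$.

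The paper applies the same chaining bound to these independent, non-identically distributed summands indexed by $\beta$. Alternatively, set $F(\beta)=m_H^{-1}\sum_k\Theta_k^0(\beta)S_k$ and use $\sup_\beta|F(\beta)|\le|F(\beta^*)|+\int_{|u-\beta^*|\le R}|F'(u)|\,du$. Combining this with McDiarmid's inequality over the $m_Hn$ observations would also give the stated $\sqrt{\log(2/\delta)/(m_Hn)}$ bound.
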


\noindent\textit{Proof.}
Write $\varepsilon_k=\sigma_kU_k$ and center each intercept by $a_k$.
The true quantile intercept is $a_k+\sigma_k(2\tau_\ell-1)$.
Thus the common-slope model and the local density conditions hold; the density
equals $1/(2\sigma_k)$ in a common neighborhood of the selected quantiles.
The bounded design and errors also imply the sampling and moment conditions.

\textit{Uniform concentration used in the proof.}
We first record the elementary bound needed below. A uniformly bounded,
uniformly Lipschitz family $f_t$ indexed by a fixed-dimensional compact
rectangle satisfies
\[
 \sup_t|(P_n-P)f_t|\leq K\sqrt{\log(2/\epsilon)/n}
 \quad\hbox{with probability at least }1-\epsilon.
\]
The constants depend on the envelope, Lipschitz constant, dimension and
rectangle. To see this, cover the rectangle by grids of mesh $2^{-j}$.
The number of adjacent projection pairs at level $j$ is at most $K2^{Kd j}$
for a fixed dimension $d$. Hoeffding's inequality applied to increments,
whose envelope is at most $K2^{-j}$, and a union bound with failure budgets
$\epsilon 2^{-j-1}$ give increment bounds
$K2^{-j}\sqrt{\{j+\log(2/\epsilon)\}/n}$.
Summing over levels, treating the initial grid separately, and using
continuity proves the claim. The same proof applies to independent,
non-identically distributed summands with common envelopes.

\textit{Profile derivatives and local concentration.}
Put $e=\beta-\beta^*$ and
$K_{h_0}(u)=\psi'_{\tau,h_0}(u)$; this derivative does not depend on $\tau$.
For $|e|\leq R$, each population or empirical fitted centered intercept lies
in a fixed compact interval. Indeed, $\varepsilon-Xe$ is bounded by
$\sigma_++R$, and shifting an intercept beyond this interval by
$h_0\log\{(1-\eta)/\eta\}+1$ makes its score have a fixed sign.
The logistic density and all derivatives needed below are bounded on this
compact residual interval, and $K_{h_0}$ has a positive lower bound there.
The population intercept Hessian is therefore bounded away from zero.

Apply the concentration bound to the full loss derivatives through order
three over the compact set of $(e,b_1,\ldots,b_q)$, and to the two design
counts, using a union bound over at most $m$ honest sites. Except on an event
of probability at most $\delta/2$, both design values occur with proportions
at least $1/4$, and all these derivative deviations are at most
$K t_n$, where $t_n=\sqrt{L/n}$.
Monotonicity of the intercept score and its positive derivative then give
uniform fitted-intercept errors at most $K t_n$.

For each quantile, the scalar Schur complement is the weighted variance of
$X$ times its mean weight, divided by $q$. Because both design proportions
are bounded below and every weight has a positive lower bound, the empirical
and population profile Hessians are bounded below by a fixed $c>0$ on this
event. The implicit derivative formula for the fitted intercepts and the
bounded third derivatives give a common Lipschitz constant for these profile
Hessians. The Schur formula also yields
\[
 \sup_{k,|e|\leq R}|\widehat A_k(\beta)-A_k(\beta)|\leq Kt_n.
\]
This proves the population curvature requirements and empirical Lipschitz
bound. Since the Hessians are scalar and bounded below, both ridge inverses
are bounded uniformly even at $\kappa=0$. In particular,
$|1-\widehat\Theta_k\widehat A_k|=\kappa\widehat\Theta_k\leq K\kappa$.

\textit{Same-sample score expansion.}
Let $c_{k\ell}$ be the population smoothed intercept at $\beta^*$, centered
by $a_k$. Define independent, mean-zero site scores
\[
 S_k=-\frac1n\sum_{i=1}^n X_{ki}\frac1q\sum_{\ell=1}^q
 \psi_{\tau_\ell,h_0}(\varepsilon_{ki}-c_{k\ell}).
\]
Independence of $X$ and $\varepsilon$ and $\mathbb EX=0$ imply both
$\mathbb ES_k=0$ and
$\mathbb E[XK_{h_0}(\varepsilon-c_{k\ell})]=0$.
Taylor expansion in the empirical fitted intercept gives
\[
 \widehat g_k(\beta^*)=S_k+O(t_n^2)
 \quad\hbox{uniformly over honest }k.
\]
Specifically, the linear term is a fitted-intercept error of order $t_n$
times $n^{-1}\sum_iX_{ki}K_{h_0}(\varepsilon_{ki}-c_{k\ell})=O(t_n)$;
the quadratic term is bounded by the second derivative times the squared
intercept error. This bound uses no independence between these two empirical
quantities. It is precisely the same-sample remainder.

Set $\Theta_k^0(\beta)=\{A_k(\beta)+\kappa\}^{-1}$.
The Hessian bound implies
$\sup_{k,\beta}|\widehat\Theta_k(\beta)-\Theta_k^0(\beta)|\leq Kt_n$,
and Hoeffding gives $\max_k|S_k|\leq Kt_n$. Hence
\[
 -\widehat\Theta_k(\beta)\widehat g_k(\beta^*)
 =\underbrace{-\Theta_k^0(\beta)S_k}_{\xi_k(\beta)}+d_k(\beta),
 \qquad \sup_{k,\beta}|d_k(\beta)|\leq K L/n.
\]
The random remainder $d_k$ is allowed by Assumption 4, which bounds an offset
without requiring it to be an expectation. This calculation does not assert
that the implemented calibrated score is unbiased.

The deterministic functions $\Theta_k^0(\beta)$ are uniformly bounded and
Lipschitz on the interval. Apply the same concentration argument to the
$m_Hn$ independent mean-zero summands defining
$m_H^{-1}\sum_k\Theta_k^0(\beta)S_k$, now indexed only by $\beta$.
With a further failure budget $\delta/2$, including the local score maxima,
\[
 \sup_\beta\left|m_H^{-1}\sum_k\xi_k(\beta)\right|
 \leq K\sqrt{\frac{\log(2/\delta)}{m_Hn}},\qquad
 \sup_{k,\beta}|\xi_k(\beta)|
 \leq K\sqrt{\frac{\log(2m/\delta)}{n}}.
\]
Allocation of the finitely many failure budgets only changes constants.
These are the required uniform bounds with $a_{n,h_0}=K L/n$.
At the population slope the score is exactly zero for every fixed $h_0$,
by independence and centering; thus there is no population smoothing shift
in the slope in this particular model. This does not remove its finite-sample
offset $K L/n$.

\textit{Pilot and compatible parameters.}
On the same event $|\widehat g_k(\beta^*)|\leq Kt_n$.
For $|e|\leq R$, strong convexity and the triangle inequality give
\[
 \widehat Q_k(\beta^*+e)+\lambda_{\rm loc}|\beta^*+e|
 -\widehat Q_k(\beta^*)-\lambda_{\rm loc}|\beta^*|
 \geq \frac c2 e^2-(Kt_n+\lambda_{\rm loc})|e|.
\]
Choose a constant $K_0$ with $cK_0/2>K+1$.
For $n/L$ large enough, $K_0t_n<R$ and the penalized objective exceeds its
value at $\beta^*$ at both endpoints $\beta^*\pm K_0t_n$.
Convexity therefore confines every minimizer to that interval. A minimizer
exists: with both design signs present and quantiles away from the endpoints,
the profiled loss is coercive. The median of more than half accurate pilots
has the same error bound, and soft thresholding at $\lambda_{\rm init}=K_0t_n$
gives error at most $2K_0t_n$.

Fix the derivative and message constants on the interval of radius $R$.
Choose $0<r_0\leq R$ with $Cr_0\leq1/8$, and then $\kappa$ with
$CK\kappa\leq1/8$. Finally choose $n/L$ large enough that
$2K_0t_n\leq r_0$ and $4C\varepsilon\leq r_0$, where $\varepsilon$ is the
three-term rate in the theorem. These choices are possible for
$\log(m/\delta)=o(n)$ and fixed $\bar\gamma<1/2$.
The deterministic trimming, thresholding and induction arguments in
Appendices~\ref{sup:aggregation}--\ref{sup:iteration} now give the displayed terminal bound on the joint event.
Support recovery follows under the corresponding beta-min condition; with
one nonzero coefficient it does not address selection among many predictors.
\hfill$\square$

This example supplies a primitive verification for fixed-dimensional CQR,
including multiple quantiles, but leaves the general sparse model conditional.
It also exposes a local $L/n$ term that should not be discarded when many sites
are pooled. For example, for fixed $\delta$ this term is negligible relative
to $(m_Hn)^{-1/2}$ only if $m_H L^2/n\to0$.

\section[Comparison with common initialization and tuning]{Comparison with common initialization and tuning}\label{sup:matched}
We generated 50 independent data sets for each combination of $H\in\{0,1\}$
and sign-reversal or mimic attacks, using the original simulation generator,
$m=20,n=90,p=30,s=5$, and $\alpha=0.2$. Within each data set all four methods
start from exactly the same attacked median-and-thresholded local pilot.
They use the same $h=0.24$, four rounds, trim fraction $0.25$, threshold floor
$0.042$, threshold schedule and step-norm cap. Local pilot penalties depend
only on the common site size and $p$, not on fault identities.

The only algorithmic change is the preconditioner: the local ridge inverse
$(\widehat A_k+0.08I)^{-1}$ is compared with $cI$ for the three prespecified
values $c=1,4,8$. All variants send parameter messages
$\beta-P_kg_k$ before trimming, so attacks operate in the same message units.
At each round, each method uses the same attack family and random seed;
the actual attack values can differ because the honest messages differ.
Thus this comparison measures the effect of the preconditioner, including
its interaction with attacks and the common cap. It does not isolate
heterogeneity correction from ordinary optimization preconditioning.

\begin{table}[ht]
\centering\small
\caption{Matched four-round comparison: mean $\ell_2$ error (Monte Carlo
standard error), 50 paired replications per setting.}
\begin{tabular}{llrrrr}
\toprule
$H$ & Attack & Local inverse & $I$ & $4I$ & $8I$\\
\midrule
0 & Sign & .151 (.002) & .296 (.003) & .300 (.006) & 1.307 (.009)\\
1 & Sign & .183 (.003) & .452 (.004) & .176 (.004) & 1.532 (.031)\\
0 & Mimic & .105 (.002) & .243 (.003) & .305 (.008) & 1.417 (.012)\\
1 & Mimic & .130 (.004) & .377 (.005) & .577 (.048) & 2.289 (.025)\\
\bottomrule
\end{tabular}
\end{table}

Local calibration has smaller mean error than the unit-step comparison in
all four settings, but it does not dominate every scalar step. For $H=1$
under sign reversal, the paired error difference ($4I$ minus local inverse)
is $-0.00685$ with Monte Carlo standard error $0.00358$. This matched
experiment supports an advantage over the unit step, but not uniformly over
all scalar steps. It is not an equal-wall-time comparison or a study of
converged optimizers.

The reproducibility archive includes \texttt{run\_matched.py}, every seed and round-level
error, honest-message dispersion, cap count, and paired summary. Reproduce from
its code directory with
\texttt{python run\_matched.py --reps 50 --workers 4}.
Seeds are $202609150+1000j+i$ for settings $j=0,1,2,3$ in the order shown
above and replications $i=0,\ldots,49$; attack seeds add $100000$ at initialization
and $200000+t$ at round $t=0,\ldots,3$. The three scalar steps are all reported;
no best-step selection is used in the summary.

\begingroup
\interlinepenalty=10000
\bibliography{references}
\endgroup

\end{document}